\documentclass{article}
\usepackage[utf8]{inputenc}
\usepackage{amsmath}
\interdisplaylinepenalty=2500
\usepackage{amssymb,amsthm}
\usepackage{bbm}
\usepackage{mathtools}
\usepackage{stmaryrd}

\usepackage[margin=1.5in]{geometry}
\linespread{1.2}

\usepackage{enumitem}

\usepackage[caption=false]{subfig}

\usepackage{pgf}
\usepackage{tikz}
\usepackage[utf8]{inputenc}
\usetikzlibrary{arrows,automata}
\usetikzlibrary{positioning}
\usetikzlibrary{decorations.pathmorphing}
\usetikzlibrary{shapes.geometric}

\usepackage{color}

\newtheorem{theorem}{Theorem}
\newtheorem{lem}{Lemma}

\theoremstyle{remark}

\theoremstyle{definition}

\newtheorem{example}{Example}

\newcommand{\mbf}{\mathbf}
\newcommand{\mbb}{\mathbb}
\newcommand{\mc}{\mathcal}

\newcommand\blfootnote[1]{%
  \begingroup
  \renewcommand\thefootnote{}\footnote{#1}%
  \addtocounter{footnote}{-1}%
  \endgroup
}

\newcommand{\e}[1]{^{(#1)}}

\DeclareMathOperator{\ran}{\mathrm{ran}}
\DeclareMathOperator{\diam}{\mathrm{diam}}

\title{Uncertain Wiretap Channels and Secure Estimation}
\author{Moritz Wiese \and
Karl Henrik Johansson \and
Tobias J. Oechtering \and
Panos Papadimitratos \and
Henrik Sandberg \and
Mikael Skoglund}

\begin{document}

\maketitle

\begin{abstract}
	Uncertain wiretap channels are introduced. Their zero-error secrecy capacity is defined.  If the sensor-estimator channel is perfect, it is also calculated. Further properties are discussed. The problem of estimating a dynamical system with nonstochastic disturbances is studied where the sensor is connected to the estimator and an eavesdropper via an uncertain wiretap channel. The estimator should obtain a uniformly bounded estimation error whereas the eavesdropper's error should tend to infinity. It is proved that the system can be estimated securely if the zero-error capacity of the sensor-estimator channel is strictly larger than the logarithm of the system's unstable pole and the zero-error secrecy capacity of the uncertain wiretap channel is positive.
\end{abstract}

\section{Introduction}

If\blfootnote{All authors are with the ACCESS Linnaeus Centre, KTH Royal Institute of Technology, SE-10044 Stockholm, Sweden. \\E-mails: \{moritzw, kallej, oech, papadim, hsan, skoglund\}@kth.se} ``independent noise" is assumed for every time step, it tends to be considered as stochastic in information theory. In contrast to this, in robust control theory it is common to consider dynamical systems with nonstochastic disturbances. In order to give a unified framework for the latter case, Nair has proposed a ``nonstochastic information theory" \cite{Nnst}. The basic channel model in \cite{Nnst} is the newly introduced \textit{uncertain channel}, a rule which determines which channel input can generate which channel outputs \textit{without} weighting the possible outputs given the inputs. For finite alphabets, every uncertain channel thus corresponds to a 0-1-matrix obtained from a stochastic matrix by replacing every positive entry by 1. Thus, uncertain channels are natural objects in zero-error information theory. Nair also introduced an analog to mutual information which plays the same role for the zero-error capacity of uncertain channels as mutual information for the capacity of discrete memoryless channels. 

In \cite{Nnst}, Nair applied his nonstochastic information theory to the problem of estimating an unstable scalar dynamical system with nonstochastic disturbances at a remote location which obtains sensor data through an uncertain channel $\mbf T$. He showed that the estimation error can be bounded uniformly if the zero-error capacity $C_0(\mbf T)$ of $\mbf T$ is strictly larger than the logarithm of the system's unstable pole $\lambda>1$. This is ``almost" necessary as well in the sense that $C_0(\mbf T)\geq\log \lambda$ is required for uniform boundedness of the estimation error.

In this paper we add to the above problem that an eavesdropper overhears the communication between sensor and estimator via a second uncertain channel. The estimation error at the intended location should again be bounded uniformly, whereas for every eavesdropper output sequence there should be two system paths whose distance tends to infinity with increasing time. We call this the problem of secure estimation. A similar problem has been studied in \cite{LLZ} for the case of stochastic system and channel noise. 

For our nonstochastic setting, this leads to the introduction of the uncertain wiretap channel: a pair $(\mbf T_B,\mbf T_C)$ of uncertain channels with common input alphabet. A zero-error wiretap code is a zero-error code for $\mbf T_B$ such that every eavesdropper output word can be generated by at least two different messages. Surprisingly, positivity of the zero-error secrecy capacity $C_0(\mbf T_B,\mbf T_C)$ is sufficient, in addition to Nair's sufficient condition for a bounded estimation error, in order for secure estimation to be possible. The reason for this is that the system's instability helps to achieve the goal of security as soon as a sufficiently large error on the eavesdropper side has been introduced at the beginning of transmission.

The schemes for data transmission from the sensor to the estimator apply block codes. Thus there are inter-decoding times where no new data arrive at the estimator. The error at those times increases with communication delay. On the other hand, we show that the estimation error at decoding times can be made to vanish asymptotically at the cost of increased delay. Similarly, we provide a lower bound on the speed of divergence for the eavesdropper's error which increases with increasing delay. 

We calculate the secrecy capacity in the case of a perfect sensor-estimator channel. It either equals zero or the logarithm of the size of the input alphabet. An example shows that for general uncertain wiretap channels, no secure message transmission may be possible at blocklength 1, whereas a positive transmission rate is achieved for blocklengths $\geq 2$. It also shows that encoders for zero-error wiretap codes in general have to be strictly uncertain channels, i.~e.\ every message can be mapped to several possible codewords similar to the use of stochastic encoders for stochastic wiretap channels. We do not apply Nair's nonstochastic information-theoretic quantities in any of the analyses. Further, uncertain wiretap channels do not appear to provide new insights for the study of zero-error capacity, an overview of which is given in \cite{KO}.

\textit{Outline:} Section II describes the problems considered, Section III presents the results and Section IV contains the proofs.

\section{Model}

\subsection{Uncertain Channels}

Let $\mbf A,\mbf B$ be finite alphabets. An \textit{uncertain channel from $\mbf A$ to $\mbf B$} is a mapping $\mbf T:\mbf A\rightarrow2^{\mbf B}_*:=2^{\mbf B}\setminus\{\varnothing\}$. For any $a\in\mbf A$, the set $\mbf T(a)$ is the family of possible output values of the channel given the input $a$. Only one of the elements of $\mbf T(a)$ will actually be attained when transmitting $a$. That $\mbf T(a)\neq\varnothing$ for all $a$ means that every input generates an output. We will write $\ran(\mbf T)$ for the set of possible outputs of $\mbf T$, i.~e.\ $\ran(\mbf T)=\cup_{a\in\mbf A}\mbf T(a)$. 

An \textit{$M$-code} is a collection $\{\mbf F(m):1\leq m\leq M\}$ of nonempty and mutually disjoint subsets of $\mbf A$. This is equivalent to an uncertain channel $\mbf F:\{1,\ldots,M\}\rightarrow2^{\mbf A}_*$ with disjoint output sets, so we will often denote such a code by $\mbf F$. The necessity of codes with $\lvert\mbf F(m)\rvert\geq 2$ for some $m$ is shown in Examples \ref{ex:superact} and \ref{ex:singletonnec}. It is similar to the necessity of stochastic encoders for stochastic wiretap channels. 

Obviously, first applying $\mbf F$ and then $\mbf T$ leads to a new uncertain channel $\mbf T\circ\mbf F:\{1,\ldots,M\}\rightarrow2^{\mbf B}_*$ called the \textit{composition of $\mbf F$ and $\mbf T$}. Formally, we have for any $m\in\{1,\ldots,M\}$
\[
	(\mbf T\circ\mbf F)(m):=\mbf T(\mbf F(m)):=\bigcup_{a\in\mbf F(m)}\mbf T(a).
\]

A nonstochastic $M$-code $\mbf F$ is called a \textit{zero-error $M$-code for $\mbf T$} if  for any $m,m'\in\{1,\ldots,M\}$ with $m\neq m'$
\begin{equation}\label{eq:reliability}
	\mbf T(\mbf F(m))\cap\mbf T(\mbf F(m'))=\varnothing.
\end{equation}
Thus every possible channel output $y\in\ran(\mbf T\circ\mbf F)$ can be associated to a unique message $m$. For this to hold it is necessary that the sets $\mbf F(m)$ be disjoint, which is the reason for this assumption in the definition of $M$-codes.

Given an additional finite alphabet $\mbf C$, an \textit{uncertain wiretap channel} is a pair of uncertain channels $(\mbf T_B:\mbf A\rightarrow2^{\mbf B}_*,\mbf T_C:\mbf A\rightarrow2^{\mbf C}_*)$. The interpretation is that the outputs of channel $\mbf T_B$ are received by the intended receiver, whereas the outputs of $\mbf T_C$ are heard by an eavesdropper. See Fig.\ \ref{fig:singletonnec} for an example of an uncertain wiretap channel.

\begin{figure}
\centering
	\begin{tikzpicture}[alphabet/.style={draw, rounded corners}]
		\node[alphabet] (Eve) {$\mbf C$};
		\node[alphabet, right = 2cm of Eve] (Alice) {$\mbf A$};
		\node[alphabet, right = 2cm of Alice] (Bob) {$\mbf B$};
		
		\node[below = .2cm of Alice] (a1) {$a_1$};
		\node[below = 1cm of Alice] (a2) {$a_2$};
		\node[below = 1.8cm of Alice] (a3) {$a_3$};
		\node[below = 2.6cm of Alice] (a4) {$a_4$};
		
		\node[below = 1cm of Eve] (c1) {$c_1$};
		\node[below = 1.8cm of Eve] (c2) {$c_2$};
		
		\node[below = .6cm of Bob] (b1) {$b_1$};
		\node[below = 1.4cm of Bob] (b2) {$b_2$};
		\node[below = 2.2cm of Bob] (b3) {$b_3$};
		
		\draw (a1) -- (b1);
		\draw (a1) -- (c1);
		\draw (a2) -- (b2);
		\draw (a2) -- (c1);
		\draw (a3) -- (b2);
		\draw (a3) -- (c2);
		\draw (a4) -- (b3);
		\draw (a4) -- (c2);
	\end{tikzpicture}
	\caption{The channel from Example \ref{ex:singletonnec}. A line between $a_i$ and $b_j$ indicates that $b_j\in\mbf T_B(a_i)$, similar for $a_i$ and $c_j$.}\label{fig:singletonnec}
\end{figure}

An $M$-code $\mbf F$ is called a \textit{zero-error wiretap $M$-code for $(\mbf T_B,\mbf T_C)$} if it is a zero-error code for $\mbf T_B$ and additionally for every $c\in\ran(\mbf T_C\circ\mbf F)$ there are messages $m\neq m'$ such that
\begin{equation}\label{eq:security}
	c\in\mbf T_C(\mbf F(m))\cap\mbf T_C(\mbf F(m')).
\end{equation}
Thus every output $c\in\ran(\mbf T_C\circ\mbf F)$ can be generated by at least two possible messages.

We define the \textit{$n$-fold product of an uncertain channel} $\mbf T:\mbf A\rightarrow2^{\mbf B}_*$ as the uncertain channel $\mbf T^n:\mbf A^n\rightarrow(2^{\mbf B}_*)^n$ defined by
\begin{equation}\label{eq:nprod}
	\mbf T^n(a^n)=\mbf T(a_1)\times\cdots\times\mbf T(a_n).
\end{equation}
We call an $M$-code $\mbf F$ on the alphabet $\mbf A^n$ an \textit{$(M,n)$-code}. Given an uncertain channel $\mbf T$, an $(M,n)$-code $\mbf F$ is called a \textit{zero-error $(M,n)$-code for $\mbf T$} if \eqref{eq:reliability} is satisfied with $\mbf T\circ\mbf F$ replaced by $\mbf T^n\circ\mbf F$. We set $N_{\mbf T}(n)$ to be the maximal $M$ such that there exists a zero-error $(M,n)$-code for $\mbf T$ and define the \textit{zero-error capacity of $\mbf T$} by
\begin{equation}\label{eq:C0}
	C_0(\mbf T):=\sup_n\frac{\log N_{\mbf T}(n)}{n}.
\end{equation}
Given an uncertain wiretap channel $(\mbf T_B,\mbf T_C)$, an $(M,n)$-code $\mbf F$ is called a \textit{zero-error wiretap $(M,n)$-code for $(\mbf T_B,\mbf T_C)$} if it is a zero-error code for $\mbf T_B$ and if \eqref{eq:security} holds with $\mbf T_C\circ\mbf F$ replaced by $\mbf T_C^n\circ\mbf F$. We define $N_{(\mbf T_B,\mbf T_C)}(n)$ to be the maximal $M$ such that there exists a zero-error wiretap $(M,n)$-code for $(\mbf T_B,\mbf T_C)$. Then
\begin{align}
	C_0(\mbf T_B,\mbf T_C):=\sup_n\frac{\log N_{(\mbf T_B,\mbf T_C)}(n)}{n}\label{eq:C0WT}
\end{align}
is called the \textit{zero-error secrecy capacity of $(\mbf T_B,\mbf T_C)$}. Due to the superadditivity of the sequences $\log N_{\mbf T}(n)$ and $\log N_{(\mbf T_B,\mbf T_C)}(n)$, the suprema in \eqref{eq:C0} and \eqref{eq:C0WT} can be replaced by limits by the well-known Fekete's lemma \cite{Fek}, see also \cite{CK}.

\subsection{The Unstable Dynamical System}

Let $\lambda>1$ and consider the real-valued system
\begin{subequations}\label{eq:dynsyst}
\begin{align}
	x(t+1)&=\lambda x(t)+w(t),\label{eq:dynrec}\\
	x(0)&=0.
\end{align}
\end{subequations}
where $w(t)$ is a nonstochastic disturbance with range $[-\Omega/2,\Omega/2]$ for some $\Omega>0$. With
\begin{equation}\label{eq:finhorizbound}
	\tilde\Omega^{(t)}:=\frac{\Omega}{\lambda-1}(\lambda^t-1),
\end{equation}
the range of possible values of this system at time $t$ equals $[-\tilde\Omega^{(t)}/2,\tilde\Omega^{(t)}/2]$, whose diameter grows exponentially in $t$. A sensor performs perfect state measurements, encodes them and sends them through an uncertain wiretap channel $(\mbf T_B,\mbf T_C)$. The dynamic system and the channel are synchronous, i.~e.\ one symbol can be transmitted through the channel at every system time step. The goal is that the receiver of $\mbf T_B$ (the estimator) be able to estimate the state with bounded estimation error and the eavesdropper's estimation error tend to infinity.

Formally, a \textit{transmission scheme} $(n_k,f_k,\varphi_k)_{k=1}^\infty$ consists of a bounded sequence of positive natural numbers $(n_k)_{k=1}^\infty$ and, defining $t_k:=\sum_{i=1}^kn_i$, for each $k\in\mbb N$ an uncertain channel $f_k:\mbb R^{t_k}\rightarrow2^{\mbf X^{n_k}}_*$ and a mapping $\varphi_k:\mbf Y^{t_k}\rightarrow\mbb R^{n_k}$. Every uncertain channel $f_k$ maps the observations of the system state up till time $t_k$ into one of several possible codewords of length $n_k$. The receiver of $\mbf T_B$ uses $\varphi_k$ to produce from all symbols received so far an estimate $\hat x(t_k),\ldots,\hat x(t_{k+1}-1)$ of the system states $x(t_k),\ldots,x(t_{k+1}-1)$. 

The minimal delay which has to be tolerated is $\max_kn_k$. At this delay, the receiver has good estimates for the states at times $t_k$ but has to extrapolate for the states $x(t_k+1),\ldots,x(t_{k+1}-1)$. In particular, for the first $t_1-1$ steps of the evolution, the estimator has to rely on a rule which is independent of any observations and which we assume to estimate $\hat x(t)=0\;(0\leq t\leq t_1-1)$. Further, every system path $(x(t))_{t=0}^\infty$ generates a sequence $(c_t)_{t=1}^\infty$ of eavesdropper outputs.

Given a transmission scheme $(n_k,f_k,\varphi_k)_{k=1}^\infty$ and a sequence of estimates $(\hat x(t))_{t=0}^\infty$, we denote by $\mbf R_B((\hat x(t))_{t=0}^\infty)$ the set of system paths $(x(t))_{t=0}^\infty$ which using the transmission scheme can generate $(\hat x(t))_{t=0}^\infty$. One can consider $\mbf R_B$ as an uncertain channel in the reverse direction with $\mbb R^\infty$ as input and output alphabet. Similarly, for any infinite sequence $(c_t)_{t=1}^\infty\in\mbf Z^\infty$ of eavesdropper outputs, we denote by $\mbf R_C((c_t)_{t=1}^\infty)$ the set of system paths $(x(t))_{t=0}^\infty$ which can give rise to $(c_t)_{t=1}^\infty$.

For two sequences $(a_t)_{t=1}^\infty$ and $(b_t)_{t=1}^\infty$ let us define their distance to be $\lVert (a_t)-(b_t)\rVert_\infty:=\sup_t\lvert a_t-b_t\rvert$. For a set $S$ of sequences we define its diameter by
\[
	\diam(S):=\sup\{\lVert(a_t)-(b_t)\rVert_\infty:(a_t),(b_t)\in S\}.
\]
The transmission scheme $(n_k,f_k,\varphi_k)_{k=1}^\infty$ is called \textit{reliable } if the estimation error is bounded uniformly in the estimates, i.~e.\ there exists a constant $\kappa>0$ such that for every possible estimate sequence $(\hat x(t))_{t=0}^\infty$, 
	\[
		\sup\{\lVert(x(t))-(\hat x(t))\rVert_\infty:(x(t))_{t=0}^\infty\in\mbf R_B((\hat x(t))_{t=0}^\infty)\}\leq \kappa.
	\]
Further, $(n_k,f_k,\varphi_k)_{k=1}^\infty$ is called \textit{secure} if for every sequence $(c_t)_{t=1}^\infty\subset\mbf C^\infty$
	\[
		\diam(\mbf R_C((c_t)_{t=1}^\infty))=\infty.
	\]
Note that security is an asymptotic property due to the boundedness of the range of possible system states in any finite time horizon, cf.\ \eqref{eq:finhorizbound}. Upon receiving a sequence $(c_t)_{t=1}^\infty$ of channel outputs generated by a secure transmission scheme, the eavesdropper will not be able to estimate the system path $(x(t))_{t=0}^\infty$ that generated $(c_t)_{t=1}^\infty$ with a bounded estimation error.

\section{Results}

\subsection{Main Results}

\begin{theorem}\label{thm:main}
	A reliable and secure transmission scheme exists if $C_0(\mbf T_B)>\log\lambda$ and $C_0(\mbf T_B,\mbf T_C)>0$.
\end{theorem}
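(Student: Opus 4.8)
The plan is to combine Nair's reliable quantize-and-refine scheme (made available by the first hypothesis $C_0(\mbf T_B)>\log\lambda$) with a single injection of ambiguity via a wiretap code (made available by $C_0(\mbf T_B,\mbf T_C)>0$), and then to let the plant's instability amplify that ambiguity until the eavesdropper's uncertainty set has infinite diameter. First I would fix the two building blocks. Since $C_0(\mbf T_B)>\log\lambda$, I pick a blocklength $n_{\mathrm{rel}}$ and a zero-error $(N,n_{\mathrm{rel}})$-code for $\mbf T_B$ with $N>\lambda^{n_{\mathrm{rel}}}$; transmitting the index of a cell in an $N$-cell partition of the current uncertainty interval yields net contraction per block (the interval grows by $\lambda^{n_{\mathrm{rel}}}$ but is cut by $N>\lambda^{n_{\mathrm{rel}}}$), hence a uniformly bounded estimation error exactly as in \cite{Nnst}. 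Since $C_0(\mbf T_B,\mbf T_C)>0$, I pick a blocklength $n_{\mathrm{wt}}$ and a zero-error wiretap code with at least two messages; I keep only $M_0=2$ of them, so that for every output in $\ran(\mbf T_C^{\,n_{\mathrm{wt}}}\circ\mbf F)$ both messages are consistent by \eqref{eq:security}.

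The combined scheme uses block $1$ (length $n_{\mathrm{wt}}$, so $t_1=n_{\mathrm{wt}}$) to split the time-$t_1$ uncertainty interval $[-\tilde\Omega^{(t_1)}/2,\tilde\Omega^{(t_1)}/2]$ into a lower half $L$ and an upper half $U$ and to send, via the wiretap code, the message telling which half contains $x(t_1)$. Each subsequent block $k\geq 2$ (length $n_{\mathrm{rel}}$) refines within the decoded half using the reliable code, with the crucial design choice that the transmitted codeword depends only on the index of the refining cell, not on which half was decoded. Because $L$ and $U$ are translates of one another by $\delta:=\tilde\Omega^{(t_1)}/2$ and the dynamics $x\mapsto\lambda x+w$ are affine, the reachable sets of the two halves remain exact translates by $\lambda^{t-t_1}\delta$ for all $t>t_1$; choosing the two cell partitions to be translates as well makes the encoding genuinely half-invariant. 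The sequence $(n_k)$ is bounded, as required. Reliability is then immediate: the legitimate receiver decodes the half (zero-error for $\mbf T_B$) and thereafter runs the contracting refinement inside it, bounding the error uniformly, while the finite extrapolation error within blocks and over the initial $t_1-1$ steps is absorbed into $\kappa$.

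For security I would show that every eavesdropper output sequence $(c_t)_{t=1}^\infty$ produced by the scheme admits two consistent paths of infinite separation. Suppose $(c_t)$ is produced from a true path $A$ lying, say, in half $L$. Define a second path $B$ that reaches $x^B(t_1)=x^A(t_1)+\delta\in U$ at time $t_1$ (reachable from $0$ since $x^B(t_1)$ lies in the time-$t_1$ reachable set) and uses $w^B(t)=w^A(t)$ for all $t\geq t_1$; then $x^B(t)-x^A(t)=\lambda^{t-t_1}\delta$, which diverges. On block $1$, the wiretap property with $M_0=2$ guarantees that message $U$ can also generate the observed eavesdropper output, so $B$ is consistent there. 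On every block $k\geq 2$, since $B$ is the exact translate of $A$ and the partitions are translates, $B$ lies in the same refining cell as $A$ and hence sends the identical codeword, so $B$ is consistent with $c_t$ there as well. Thus $A,B\in\mbf R_C((c_t))$ with $\lVert (x^A(t))-(x^B(t))\rVert_\infty=\infty$, giving $\diam(\mbf R_C((c_t)))=\infty$.

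The main obstacle is exactly the step that makes the single injected ambiguity survive and grow, namely ensuring that no transmission after block $1$ lets the eavesdropper resolve $L$ versus $U$. This is what the half-invariant encoding secures, and it works only because affineness keeps the two branches exact translates while instability inflates the translation factor $\lambda^{t-t_1}$ to infinity. Carefully verifying that the translate path $B$ is genuinely realizable (reachable from $0$, with disturbances in range for all $t$) and genuinely indistinguishable from $A$ on every block, and simultaneously that half-invariant encoding does not spoil the legitimate receiver's bounded error, is the technical heart of the argument.
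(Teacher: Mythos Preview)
Your overall strategy is sound and in fact somewhat cleaner than the paper's, but there is one genuine gap. The step ``I keep only $M_0=2$ of them, so that for every output \dots\ both messages are consistent by \eqref{eq:security}'' is not justified: discarding all but two messages of a zero-error wiretap $(M,n_{\mathrm{wt}})$-code need not preserve the security condition, since an eavesdropper output that was previously covered by, say, messages $3$ and $4$ may now lie in only one of the two retained sets $\mbf T_C(\mbf F(1)),\mbf T_C(\mbf F(2))$. The repair is painless and does not change your architecture: keep all $M$ messages, partition the time-$t_1$ interval into $M$ equal parts, and for the \emph{given} block-$1$ eavesdropper word $(c_1,\dots,c_{n_{\mathrm{wt}}})$ invoke \eqref{eq:security} to obtain two consistent parts $m\neq m'$; then let $B$ be the translate of $A$ from part $m$ to part $m'$. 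Your part-invariant encoding and translate argument go through verbatim with this modification.

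The paper's proof is organised differently. It also runs in two phases, but Phase~1 applies the wiretap code for $K$ consecutive blocks rather than one, driving the quantizer \eqref{eq:A(n),B(n)}--\eqref{eq:In} with $M_1<\lambda^{n_1}$; Lemma~\ref{thm:estandsecwith} then shows that the midpoints of two eavesdropper-consistent intervals separate like $\lambda^{Kn_1}$. Phase~2 switches to an ordinary zero-error code, and security is concluded via Lemma~\ref{thm:estandsecwithout}, which says that once the midpoint separation exceeds the threshold $\Omega/(\lambda-1)+|I_0|$ it continues to diverge \emph{regardless} of the subsequent index sequences; the $K$ blocks are there precisely to cross that threshold. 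Your construction sidesteps the threshold altogether: by making the Phase-2 encoding depend only on the relative cell index, the translate path $B$ transmits literally the same codewords as $A$, so a single wiretap block suffices. What the paper's heavier machinery buys in return is the quantitative divergence-rate estimate of Lemma~\ref{lem:divcoeff} (via the parameter $L$ and the coefficient $\Delta_{(\mbf T_B,\mbf T_C)}(n)$), which your single-injection scheme does not directly deliver.
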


The main idea behind Theorem 1 is that the system's instability helps to achieve the goal of security as soon as a sufficiently large error on the eavesdropper side has been introduced at the beginning of transmission.

To apply Theorem \ref{thm:main}, $C_0(\mbf T_B)$ and $C_0(\mbf T_B,\mbf T_C)$ have to be known. However, the zero-error capacity $C_0(\mbf T_B)$ is unknown for most channels except a few special cases, cf.\ \cite{KO}. Neither do we provide a general formula for $C_0(\mbf T_B,\mbf T_C)$ here. A solution can be given, though, when the calculation of $C_0(\mbf T_B)$ is trivial.

\begin{theorem}\label{lem:TBinj}
	If $\mbf T_B$ is an injective function from $\mbf A$ to $\mbf B$, then $C_0(\mbf T_B,\mbf T_C)\in\{0,\log\lvert\mbf A\rvert\}$. Further, $C_0(\mbf T_B,\mbf T_C)=0$ if and only if there is no zero-error wiretap $(M,1)$-code for $(\mbf T_B,\mbf T_C)$ for any $M\geq 2$.
\end{theorem}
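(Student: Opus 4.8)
The plan is to exploit that an injective $\mbf T_B$ makes the reliability constraint \eqref{eq:reliability} vacuous: since $\mbf T_B$ is a function, so is $\mbf T_B^n$, and injectivity of $\mbf T_B$ forces injectivity of $\mbf T_B^n$, so any family of mutually disjoint message sets automatically has disjoint images under $\mbf T_B^n$. In particular $C_0(\mbf T_B)=\log\lvert\mbf A\rvert$, which already gives the upper bound $C_0(\mbf T_B,\mbf T_C)\leq\log\lvert\mbf A\rvert$. The first step is to record the resulting reformulation of security. Because reliability is free and singleton message sets are allowed, a zero-error wiretap $(M,n)$-code with $M\geq 2$ exists if and only if there is a set $B\subseteq\mbf A^n$ with $\lvert B\rvert\geq 2$ that is \emph{doubly covered}, meaning every $c^n\in\mbf T_C^n(B)$ satisfies $c^n\in\mbf T_C^n(u)$ for at least two distinct $u\in B$. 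For the forward direction I would take $B=\bigcup_m\mbf F(m)$ and invoke \eqref{eq:security}; for the converse I would turn $B$ into a code with the $\lvert B\rvert$ singleton messages $\{u\}$, $u\in B$, for which double coverage is exactly \eqref{eq:security}.

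Next I would prove the achievability half of the dichotomy: if some zero-error wiretap $(M,1)$-code $\mbf F$ with $M\geq 2$ exists, then $C_0(\mbf T_B,\mbf T_C)=\log\lvert\mbf A\rvert$. The construction uses the first channel use for security and the remaining $n-1$ uses to carry unconstrained data: for $m\in\{1,\dots,M\}$ and $w\in\mbf A^{n-1}$ set $\mbf G(m,w):=\mbf F(m)\times\{w\}$. These $M\lvert\mbf A\rvert^{n-1}$ sets are disjoint (hence reliable), and security transfers coordinatewise, since any eavesdropper output $(c_1,c')\in\mbf T_C(\mbf F(m))\times\mbf T_C^{n-1}(w)$ has $c_1$ shared by a second message $\mbf F(m')$ of the inner code, so $(c_1,c')\in\mbf T_C^n(\mbf G(m',w))$. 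Letting $n\to\infty$ gives rate $\log\lvert\mbf A\rvert$, matching the upper bound. This step also settles one direction of the stated equivalence: if a $(M,1)$-code with $M\geq 2$ exists then $C_0(\mbf T_B,\mbf T_C)>0$, so $C_0(\mbf T_B,\mbf T_C)=0$ forces the nonexistence of such a code.

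It remains to prove the converse, which is where I expect the real work to lie: if there is no doubly covered set in $\mbf A$ of size $\geq 2$ (equivalently, no $(M,1)$-code with $M\geq 2$), then there is no doubly covered set in $\mbf A^n$ of size $\geq 2$ for any $n$; this forces $N_{(\mbf T_B,\mbf T_C)}(n)=1$ for all $n$ and yields both $C_0(\mbf T_B,\mbf T_C)=0$ and the dichotomy. I would argue by induction on $n$, producing from a doubly covered $B\subseteq\mbf A^n$ a doubly covered pair in $\mbf A$. Consider the projection $A_1\subseteq\mbf A$ of $B$ onto the first coordinate. If $A_1$ is a single symbol, the first coordinate is inert and stripping it yields a doubly covered set in $\mbf A^{n-1}$ of the same size, so induction applies. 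If $\lvert A_1\rvert\geq 2$ and $A_1$ is itself doubly covered we are already done at blocklength $1$.

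The genuinely delicate case — and the main obstacle — is cross-coordinate sharing: when $\lvert A_1\rvert\geq 2$ but $A_1$ is not doubly covered, the naive first-coordinate projection fails because the second codeword witnessing \eqref{eq:security} for a given $u\in B$ may repeat the first letter of $u$. My plan is to turn this failure into progress. If $A_1$ is not doubly covered, some $a_0\in A_1$ uniquely covers an output symbol $c_0\in\mbf T_C(a_0)$, i.e.\ $c_0\notin\mbf T_C(a')$ for every $a'\in A_1\setminus\{a_0\}$. Then for any $u\in B$ with first letter $a_0$ and any eavesdropper output of the form $(c_0,c')$, the witnessing second codeword must also have first letter $a_0$, since within $A_1$ only $a_0$ produces $c_0$. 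Hence the fiber $B_{a_0}:=\{u\in B:u_1=a_0\}$ has size $\geq 2$, and after deleting the first coordinate it is a doubly covered set in $\mbf A^{n-1}$; induction then produces a doubly covered pair in $\mbf A$, contradicting its assumed nonexistence. Assembling the three parts gives $C_0(\mbf T_B,\mbf T_C)\in\{0,\log\lvert\mbf A\rvert\}$ together with the claimed characterization of the zero case.
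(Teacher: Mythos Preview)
Your proof is correct and follows a genuinely different route from the paper's.

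The paper works with the hypergraph $H(\mbf T_C^n)$ and an iterative \emph{peeling} procedure: starting from $\mbf A^n$, repeatedly delete every vertex that forms a singleton hyperedge, until the surviving set $\mbf A_2^{(n)}$ has no singleton hyperedges left (so $\mbf A_2^{(n)}$ is exactly the maximal doubly covered subset, and $N_{(\mbf T_B,\mbf T_C)}(n)=\lvert\mbf A_2^{(n)}\rvert$). The paper's key lemma is the inclusion $\mbf A_1^{(n)}\subset(\mbf A_1^{(1)})^n$, where $\mbf A_1^{(n)}=\mbf A^n\setminus\mbf A_2^{(n)}$, proved by induction on the peeling step $s$ using the square-product structure $H(\mbf T_C^n)=H(\mbf T_C)^n$. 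This yields $N_{(\mbf T_B,\mbf T_C)}(n)\geq\lvert\mbf A\rvert^n-\lvert\mbf A_1^{(1)}\rvert^n$ and hence rate $\log\lvert\mbf A\rvert$ whenever $\mbf A_1^{(1)}\subsetneq\mbf A$. The opposite case---no blocklength-$1$ code forces $C_0=0$---is dispatched in the paper with the single word ``obviously'' and is not argued; the inclusion just proved is vacuous there.

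Your achievability is more direct than the paper's: you concatenate a blocklength-$1$ wiretap code on the first coordinate with unconstrained data on the remaining $n-1$ coordinates, bypassing the peeling machinery entirely. Your converse---the projection/fiber induction on $n$, using a singly covered output $c_0$ to force the second witness into the fiber $B_{a_0}$---is exactly the argument needed to justify the step the paper leaves implicit, and it is carried out correctly (the crucial point being that when checking double coverage of the projected fiber one may \emph{choose} the first output coordinate to be $c_0$). What the paper's approach buys in exchange is a sharper structural conclusion: it pins down $N_{(\mbf T_B,\mbf T_C)}(n)$ exactly and gives the explicit lower bound $\lvert\mbf A\rvert^n-\lvert\mbf A_1^{(1)}\rvert^n$, whereas your construction only gives $M\lvert\mbf A\rvert^{n-1}$. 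For the theorem as stated, however, your argument is both sufficient and more self-contained.
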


For the proof of Theorem 2, it is sufficient to consider codes with $\lvert\mbf F(m)\rvert=1$ for all $1\leq m\leq M$. The number of those elements of $\mbf A^n$ which cannot be used as codewords grows exponentially, at a rate which is less than $\log\lvert\mbf A\rvert$ if and only if there is no zero-error wiretap $(M,1)$ code for $(\mbf T_B,\mbf T_C)$ for any $M\geq 2$. Thus the number of elements of $\mbf A^n$ that \textit{can} be used either asymptotically grows with rate $\log\lvert\mbf A\rvert$ or equals 0.

\subsection{Estimation Error and Divergence Coefficient}

We study some additional properties of secure estimation schemes. As mentioned above, using a transmission scheme $(n_k,f_k,\varphi_k)_{k=1}^\infty$ with delay $\max_kn_k$, the estimates of system states $x(t)$ with $t\neq t_k\;(k\in\mbb N)$ have to be extrapolated from the last good estimate. Thus the estimation error after a decoding time $t_k$ grows exponentially until the next decoding time $t_{k+1}$. However, for any $\varepsilon>0$ the estimation error at times $(t_k)_{k=1}^\infty$ can be made smaller than $\varepsilon$ at least for large $k$ if the inter-decoding intervals $n_k\; (k\in\mbb N)$ (and thus the inter-decoding estimate errors) are sufficiently large:

\begin{lem}\label{lem:esterr}
	For every $\varepsilon>0$ there exists a transmission scheme such that for every sequence $(\hat x(t))_{t=0}^\infty$ of estimates and every $(x(t))_{t=0}^\infty\in\mbf R_B((\hat x(t))_{t=0}^\infty)$,
	\[
		\limsup_{k\rightarrow\infty}\lvert x(t_k)-\hat x(t_k)\rvert\leq\varepsilon.
	\]
	If $C_0(\mbf T_B,\mbf T_C)>\log\lambda$, then the limit superior can even be replaced by a supremum.
\end{lem}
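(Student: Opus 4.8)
The plan is to build the transmission scheme by combining a good zero-error code for $\mbf{T}_B$ with a simple quantization of the state space, controlling the estimation error at decoding times through the choice of the block lengths $n_k$. First I would fix a blocklength $n_0$ and a zero-error $(M_0,n_0)$-code $\mbf F_0$ for $\mbf{T}_B$ with $\tfrac{1}{n_0}\log M_0 > \log\lambda$; such a code exists because $C_0(\mbf{T}_B)>\log\lambda$. The key quantitative fact is that one block of length $n_0$ reliably conveys $\log M_0$ bits, and over such a block the state range expands by a factor $\lambda^{n_0}$. Since $M_0 > \lambda^{n_0}$, each decoding step lets me partition the current uncertainty interval into $M_0$ subintervals, communicate which one the true state lies in, and thereby \emph{contract} the estimator's uncertainty about $x(t_k)$ by the factor $\lambda^{n_0}/M_0 < 1$. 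Iterating, the uncertainty at successive decoding times $t_k$ satisfies a recursion of the form $\delta_{k+1} \le (\lambda^{n_0}/M_0)\,\delta_k + (\text{bounded input term})$, whose fixed point $\delta_\infty$ is a constant I can compute explicitly from $\Omega$, $\lambda$, $n_0$, and $M_0$.

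The next step is to drive this fixed point below $\varepsilon$. The contraction factor per unit time is $(\lambda^{n_0}/M_0)^{1/n_0} = \lambda/M_0^{1/n_0}$, a fixed constant strictly less than $1$, but the residual $\delta_\infty$ need not itself be small for a single fixed block. The device is to let the block length \emph{grow}: using blocks of length $n_k = j\,n_0$ (equivalently, concatenating $j$ copies of $\mbf F_0$, which is again a zero-error code by the product construction \eqref{eq:nprod}), one transmission carries $j\log M_0$ bits while the range grows by $\lambda^{jn_0}$, so a single long block contracts the uncertainty about $x(t_k)$ by $(\lambda^{n_0}/M_0)^{j}$. By choosing $j$ (hence the $n_k$) large enough, this contraction factor becomes arbitrarily small, so the additive residual term in the recursion — which is what survives in the limit — can be forced below any prescribed $\varepsilon$. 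This yields $\limsup_{k\to\infty}\lvert x(t_k)-\hat x(t_k)\rvert \le \varepsilon$, and since $(n_k)$ must stay bounded (by definition of a transmission scheme), I would take $n_k$ eventually constant at this large value rather than increasing without bound.

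For the strengthening under $C_0(\mbf{T}_B,\mbf{T}_C)>\log\lambda$, the point is that one may now use a zero-error \emph{wiretap} code whose rate already exceeds $\log\lambda$, so the secrecy constraint \eqref{eq:security} costs nothing in reliable rate, and simultaneously the per-step contraction factor $\lambda/M_0^{1/n_0}$ is bounded away from $1$ \emph{from the very first block}. Because the initial interval $[-\tilde\Omega^{(t_1)}/2,\tilde\Omega^{(t_1)}/2]$ is already finite and the contraction applies uniformly at every decoding time — not merely asymptotically — the uncertainty $\delta_k$ is bounded by $\delta_\infty$ for \emph{all} $k$, not just large $k$. Hence the $\limsup$ upgrades to a genuine supremum over all $k$.

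The main obstacle I anticipate is the bookkeeping of the additive input term in the contraction recursion: the disturbance $w(t)$ with range $[-\Omega/2,\Omega/2]$ injects fresh uncertainty at every time step, and I must verify that the per-block accumulation of these contributions, after expansion by $\lambda$ and before the contracting communication step, stays uniformly controlled so that $\delta_\infty$ is finite and tends to $0$ as the block length grows. Carefully setting up this recursion — choosing the quantizer so that the subinterval assigned to the true state, re-expanded over the next block, still covers all reachable states including the new disturbance — is the delicate part; once the recursion is correctly stated, its convergence and the $\varepsilon$-bound follow from the geometric-series estimate.
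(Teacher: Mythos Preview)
Your contraction-recursion argument is essentially the paper's quantizer analysis (Lemma~\ref{thm:estdyn}), and the observation that enlarging the blocklength drives the fixed point $\delta_\infty=\frac{\Omega}{\lambda-1}\cdot\frac{\lambda^n-1}{M-\lambda^n}$ to zero is exactly how the paper controls the asymptotic error. The gap is that the lemma lives in the subsection on ``additional properties of secure estimation schemes'': the transmission scheme is implicitly required to be \emph{secure} as well, and this is precisely what produces the $\limsup$/supremum dichotomy you are trying to explain.

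Your general-case scheme uses only a zero-error code for $\mbf T_B$. But with $I_0=\{0\}$ the uncertainty sequence $\delta_k$ then increases monotonically to $\delta_\infty$, so your own argument already yields a \emph{supremum} bound---which should signal that something is off, since the lemma claims only $\limsup$ in general. What is missing is the security phase: when $0<C_0(\mbf T_B,\mbf T_C)\le\log\lambda$, the scheme from the proof of Theorem~\ref{thm:main} must begin with $K$ blocks of a zero-error wiretap code with $M_1<\lambda^{n_1}$, during which the estimation error \emph{grows}, reaching \eqref{eq:PmKlength} at time $Kn_1$. Only afterwards does one switch to a high-rate zero-error code for $\mbf T_B$ and let the error contract toward the fixed point \eqref{eq:anderelength}, which is then forced below $\varepsilon$ by taking $n_2$ large. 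That initial blow-up is exactly why only the $\limsup$ is controlled in general. Your account of the case $C_0(\mbf T_B,\mbf T_C)>\log\lambda$ is then right in spirit but for the wrong stated reason: the supremum bound holds there because a wiretap code of rate exceeding $\log\lambda$ can be used from the very first block, so no error-increasing security phase is needed and the monotone-to-fixed-point argument applies throughout.
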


Another parameter of interest is the speed of divergence of the diameter of the set of possible system states given eavesdropper outputs $(c_t)_{t=1}^T$ as $T\rightarrow\infty$. Given a zero-error wiretap $(M,n)$-code $\mbf F$, we define for every possible eavesdropper channel output $(c_t)_{t=1}^n\in\ran(\mbf T_C^n\circ\mbf F)$
\[
	\delta((c_t)_{t=1}^n)\\=\max\{\lvert m-m'\rvert+1:(c_t)_{t=1}^n\in\mbf T_C^n(\mbf F(m))\cap\mbf T_C^n(\mbf F(m'))\}.
\]
Clearly $2\leq\delta((c_t)_{t=1}^n)\leq M$. We then set
\[
	L:=\min\{\delta((c_t)_{t=1}^n):(c_t)_{t=1}^n\in\ran(\mbf T_C^n\circ\mbf F)\}
\]
and call $\mbf F$ a $(M,L,n)$-code. We also define
\[
	\Delta_{(\mbf T_B,\mbf T_C)}(n):=\max\left\{\frac{L-1}{M-1}:\mbf F\text{ is }(M,L,n)\text{-code}\right\}.
\]
Clearly, $0<\Delta_{(\mbf T_B,\mbf T_C)}(n)\leq 1$.

\begin{lem}\label{lem:divcoeff}
	For every $\varepsilon>0$ there exists a transmission scheme $(n_k,f_k,\varphi_k)_{k=1}^\infty$ such that for every eavesdropper output sequence $(c_t)_{t=1}^\infty$ there exist system paths $(x(t))_{t=1}^\infty,(x'(t))_{t=1}^\infty\in\mbf R_C((c_t)_{t=1}^\infty)$ satisfying
	\[
		\liminf_{T\rightarrow\infty}\frac{\lVert (x(t))_{t=1}^T-(x'(t))_{t=1}^T\rVert_\infty}{\lambda^T}\geq\frac{\Omega}{\lambda-1}\sup_n\Delta_{(\mbf T_B,\mbf T_C)}(n)-\varepsilon.
	\]
\end{lem}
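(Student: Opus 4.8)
The plan is to build a scheme whose \emph{first} block already manufactures, for every admissible eavesdropper sequence, a pair of system paths the eavesdropper cannot separate, and then to let the unstable dynamics amplify their gap at the maximal rate $\lambda$ while all later blocks merely keep the two paths consistent with the observed output and keep the estimator error bounded. Fix $\varepsilon>0$. First choose a blocklength $\ell$ and an $(M,L,\ell)$-code $\mbf G$ with $(L-1)/(M-1)=\Delta_{(\mbf T_B,\mbf T_C)}(\ell)\ge\sup_n\Delta_{(\mbf T_B,\mbf T_C)}(n)-\varepsilon/2$. To make the blocklength \emph{and} the number of messages large without lowering this ratio, I would pass to the $r$-fold product $\mbf G^{\otimes r}$ on $\mbf A^{r\ell}$, indexing its messages in base $M$: it is again a zero-error wiretap code, and since every output of $\mbf G\otimes\mbf G$ is compatible with two messages whose index gap is at least $(L-1)M+(L-1)$, one gets $L_2-1\ge(L-1)(M+1)$ and hence $(L_2-1)/(M^2-1)\ge(L-1)/(M-1)$. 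Iterating, $\mbf G^{\otimes r}$ is an $(M^r,L_r,r\ell)$-code with $(L_r-1)/(M^r-1)\ge(L-1)/(M-1)$, so taking $r$ large yields a first-block code $\mbf F$ of length $n_1:=r\ell$ with $M_1:=M^r$ messages, both as large as desired, still satisfying $\Delta_{(\mbf T_B,\mbf T_C)}(n_1)\ge\sup_n\Delta_{(\mbf T_B,\mbf T_C)}(n)-\varepsilon/2$. For every block $k\ge2$ I use a zero-error code for $\mbf T_B$ of rate exceeding $\log\lambda$, available because $C_0(\mbf T_B)>\log\lambda$; this keeps the estimator error bounded after the large but finite error committed in block $1$, so the scheme is reliable.

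Since nothing is transmitted before $t_1=n_1$, the reachable set of $x(t_1)$ is an interval of diameter $\tilde\Omega\e{n_1}=\tfrac{\Omega}{\lambda-1}(\lambda^{n_1}-1)$, which I partition into $M_1$ cells of width $w=\tilde\Omega\e{n_1}/M_1$, the transmitted message naming the cell of $x(t_1)$. Fix an admissible $(c_t)_{t=1}^\infty$. Its first $n_1$ symbols lie in $\ran(\mbf T_C^{n_1}\circ\mbf F)$, so by the definition of $L_1$ there are messages $m_1\ne m_1'$ with $\lvert m_1-m_1'\rvert\ge L_1-1$ and $(c_t)_{t=1}^{n_1}\in\mbf T_C^{n_1}(\mbf F(m_1))\cap\mbf T_C^{n_1}(\mbf F(m_1'))$. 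I place $x$ at the center of cell $m_1$ and $x'$ at the center of cell $m_1'$, so $\lvert x(t_1)-x'(t_1)\rvert=\lvert m_1-m_1'\rvert\,w\ge(L_1-1)w$, and let each path emit the codeword of its own message that produces $(c_t)_{t=1}^{n_1}$ at the eavesdropper.

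The crux is to keep both paths in $\mbf R_C((c_t)_{t=1}^\infty)$ forever. For $t\ge t_1$ I feed the two paths identical disturbances. Both start from a cell center, so the estimator's error is zero at $t_1$ for each path; since the disturbances and (inductively) the decoded messages then coincide, the two paths carry \emph{identical estimation-error sequences} for all $t\ge t_1$, whence the residual quantized in each later block is the same for both. Choosing the block-$k$ disturbances so that this common residual lands in a cell whose message is compatible with the block-$k$ part of $(c_t)$ — possible because that part lies in the range and the fresh disturbance can steer the residual to any cell — makes both paths keep generating $(c_t)$. Meanwhile $x(t)-x'(t)=\lambda^{t-t_1}(x(t_1)-x'(t_1))$ exactly, so $\lVert(x(t))_{t=1}^T-(x'(t))_{t=1}^T\rVert_\infty=\lambda^{T-t_1}\lvert x(t_1)-x'(t_1)\rvert$ for all large $T$ and
\[
\liminf_{T\to\infty}\frac{\lVert(x(t))_{t=1}^T-(x'(t))_{t=1}^T\rVert_\infty}{\lambda^T}=\lambda^{-n_1}\lvert x(t_1)-x'(t_1)\rvert\ge\frac{\Omega}{\lambda-1}\,\frac{L_1-1}{M_1}\,(1-\lambda^{-n_1}).
\]
Writing $(L_1-1)/M_1=\Delta_{(\mbf T_B,\mbf T_C)}(n_1)\,(M_1-1)/M_1$ and using that $n_1$ and $M_1$ are large while $\Delta_{(\mbf T_B,\mbf T_C)}(n_1)\ge\sup_n\Delta_{(\mbf T_B,\mbf T_C)}(n)-\varepsilon/2$, the right-hand side is at least $\tfrac{\Omega}{\lambda-1}\sup_n\Delta_{(\mbf T_B,\mbf T_C)}(n)-\varepsilon$.

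I expect the propagation step to be the main obstacle: one must check that identical disturbances after $t_1$ genuinely force identical errors, residuals and messages at every subsequent decoding time, so that a single family of disturbances keeps both paths compatible with the \emph{same} prescribed output while the gap is amplified by $\lambda^{t-t_1}$. A subordinate difficulty is that $\Delta_{(\mbf T_B,\mbf T_C)}$ might be attained only at a small blocklength with few messages, which would spoil the factors $(1-\lambda^{-n_1})$ and $(M_1-1)/M_1$; the product construction above is what removes this by enlarging $n_1$ and $M_1$ while preserving the ratio.
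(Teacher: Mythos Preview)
Your approach is genuinely different from the paper's, and the step you yourself flagged as the main obstacle---the propagation step---does not go through as written.

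The claim that ``the fresh disturbance can steer the residual to any cell'' is false. After block~1 the encoder's interval $I_{t_1}$ has positive width $w=\tilde\Omega^{(n_1)}/M_1$. In block~2 the quantizer partitions an interval of width $\lambda^{n_2}w+\tilde\Omega^{(n_2)}$ into $M_2$ cells, but the set of states reachable from the cell centre $x(t_1)$ has width only $\tilde\Omega^{(n_2)}$, strictly smaller. Whenever $M_2\ge 3$ (which is forced once $\lambda^{n_2}>2$) the outermost cells are unreachable from the centre, and the same shortfall recurs in every later block. Since the code used for $k\ge 2$ is only a zero-error code for $\mbf T_B$, the block-$k$ segment of the prescribed $(c_t)$ may be compatible with a \emph{single} message $m_k^*$; if $m_k^*$ happens to be one of the unreachable cells, neither of your centre-started paths can remain in $\mbf R_C((c_t))$. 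Starting the paths off-centre within their block-1 cells does not help: identical disturbances force identical relative positions within the respective cells at every decoding time, and any fixed relative position still misses a band of total width $\lambda^{n_2}w$ in the block-2 quantizer interval.

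The paper avoids this by decoupling the two paths after the wiretap phase rather than trying to keep them in lockstep. It runs the wiretap code for $K$ blocks, applying Lemma~\ref{thm:estandsecwith} with $m_k-m_k'\ge L-1$ to push the midpoint gap above $\tfrac{\Omega}{\lambda-1}+|I_0|$; thereafter each path may independently follow whatever message sequence compatibility with $(c_t)$ forces on it, and Lemma~\ref{thm:estandsecwithout} guarantees that the midpoint gap still grows like $\lambda^t$ for \emph{arbitrary} pairs of message sequences. No steering of the later messages is required. Your product-code argument for enlarging $n_1$ and $M_1$ while preserving $(L-1)/(M-1)$ is correct and is essentially the content of Lemma~\ref{lem:mittldist}.
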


The term on the right-hand side of the inequality in Lemma \ref{lem:divcoeff} is positive if $\varepsilon$ is chosen small enough. The case $\sup_n\Delta_{(\mbf T_B,\mbf T_C)}(n)=1$ corresponds to complete eavesdropper ignorance, cf.\ \eqref{eq:finhorizbound}.

\subsection{Uncertain Wiretap Channels}

We first note that the divergence coefficient increases with increasing blocklength (and hence delay). Thus we find a trade-off between the growth rate for the eavesdropper's estimation error and the delay:

\begin{lem}\label{lem:mittldist}
	If $C_0(\mbf T_B,\mbf T_C)>0$, then
	\[
		\sup_n\Delta_{(\mbf T_B,\mbf T_C)}(n)=\lim_{n\rightarrow\infty}\Delta_{(\mbf T_B,\mbf T_C)}(n)>0.
	\]
\end{lem}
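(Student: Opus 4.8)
The plan is to prove two things: that the sequence $\Delta_{(\mbf T_B,\mbf T_C)}(n)$ is (weakly) monotone non-decreasing wherever it is defined, which makes the equality of supremum and limit immediate, and that it is bounded away from $0$, which gives the strict positivity. Monotonicity I would obtain from a \emph{padding} construction: given any $(M,L,n)$-code $\mbf F$ and any $s\geq 1$, fix an arbitrary string $\bar a\in\mbf A^s$ and append it to every codeword, i.e.\ replace each $\mbf F(m)\subseteq\mbf A^n$ by $\{(a,\bar a):a\in\mbf F(m)\}\subseteq\mbf A^{n+s}$. I claim this yields an $(M,L,n+s)$-code with the \emph{same} parameters $M$ and $L$, and hence the same ratio $(L-1)/(M-1)$.

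First I would verify that padding preserves reliability: the $\mbf T_B$-output set of padded message $m$ is $\mbf T_B^n(\mbf F(m))\times\mbf T_B^s(\bar a)$, and since the common factor $\mbf T_B^s(\bar a)$ is nonempty and identical across messages while the sets $\mbf T_B^n(\mbf F(m))$ stay pairwise disjoint, \eqref{eq:reliability} continues to hold. Next, every eavesdropper output of the padded code has the form $(c,c')$ with $c\in\ran(\mbf T_C^n\circ\mbf F)$ and $c'\in\mbf T_C^s(\bar a)$, and the set of messages that can produce it is exactly $\{m:c\in\mbf T_C^n(\mbf F(m))\}$, because the suffix $c'$ is compatible with $\bar a$ for every message. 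Consequently the security condition \eqref{eq:security} is inherited from $\mbf F$, and the value of $\delta$ at $(c,c')$ equals the value of $\delta$ for $\mbf F$ at $c$; taking the minimum over all eavesdropper outputs gives the same $L$. As $M$ is unchanged as well, $\Delta_{(\mbf T_B,\mbf T_C)}(n+s)\geq(L-1)/(M-1)$, and choosing $\mbf F$ to attain the (finite) maximum defining $\Delta_{(\mbf T_B,\mbf T_C)}(n)$ yields $\Delta_{(\mbf T_B,\mbf T_C)}(n+s)\geq\Delta_{(\mbf T_B,\mbf T_C)}(n)$. Thus the sequence is non-decreasing over the blocklengths for which a wiretap code exists, and being bounded above by $1$ it converges to its supremum.

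For positivity I would argue that $C_0(\mbf T_B,\mbf T_C)>0$ forces $N_{(\mbf T_B,\mbf T_C)}(n)\geq 2$ for some $n$, i.e.\ there is a zero-error wiretap $(M,n)$-code with $M\geq 2$. For any such code every eavesdropper output is generated by at least two messages, so $\delta\geq 2$ everywhere and hence $L\geq 2$; therefore $\Delta_{(\mbf T_B,\mbf T_C)}(n)\geq 1/(M-1)>0$, which makes both the supremum and, by the first part, the limit strictly positive.

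The step requiring the most care is confirming that the padded code has exactly the same $L$: one must check that appending a fixed string neither merges nor splits the confusable-message structure on the eavesdropper side, so that $\delta$ at each output is literally unchanged rather than merely bounded below. I note that padding is what reaches \emph{all} large blocklengths; the alternative of concatenating two genuine wiretap codes $\mbf F_1,\mbf F_2$ under the ordering $(m_1,m_2)\mapsto(m_1-1)M_2+m_2$ only gives $L-1\geq(L_1-1)M_2+(L_2-1)$ with $M-1=M_1M_2-1$, hence $\Delta_{(\mbf T_B,\mbf T_C)}(2n)\geq\Delta_{(\mbf T_B,\mbf T_C)}(n)$ along multiples, and one still needs the single-string padding (the degenerate $M_2=1$ case) to interpolate the remaining blocklengths.
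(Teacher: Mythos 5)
Your proof is correct, and it takes a genuinely different route from the paper's. The paper establishes the lemma by concatenating a single $(M,L,n)$-code with itself $k$ times and relabelling the message $k$-tuples lexicographically, via $l^{(k+1)}(m_1,\ldots,m_{k+1})=M(l^{(k)}(m_1,\ldots,m_k)-1)+m_{k+1}$, so that any two tuples that are confusable at a given eavesdropper output differ by at least $(L-1)\frac{M^k-1}{M-1}$ in the new labelling; this yields $\Delta_{(\mbf T_B,\mbf T_C)}(kn)\geq\frac{L-1}{M-1}$, i.e.\ exactly the ``concatenation of genuine wiretap codes'' bound you describe in your closing paragraph, and it only reaches blocklengths that are multiples of $n$. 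Your padding construction is the complementary ingredient: it shows that once a wiretap code exists at some blocklength $n_0$, one exists at every larger blocklength with the identical pair $(M,L)$, so $\Delta_{(\mbf T_B,\mbf T_C)}(\cdot)$ is defined and non-decreasing for all $n\geq n_0$, which is what literally justifies writing $\lim_{n\to\infty}$ over all $n$ rather than along an arithmetic progression --- a point the paper's argument leaves implicit. The step you flag as delicate, that appending a fixed suffix leaves the eavesdropper's confusable-message structure untouched because the suffix is compatible with every message and hence $\delta$ at each output is literally unchanged, is verified correctly, as is the positivity step ($C_0(\mbf T_B,\mbf T_C)>0$ gives a code with $M\geq2$, the security condition forces $L\geq2$, hence $\Delta_{(\mbf T_B,\mbf T_C)}(n)\geq 1/(M-1)$). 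In summary, the paper's concatenation buys a ratio that is at least preserved (and, per the paper, improves towards $L/(M-1)$) along multiples of $n$, while your padding buys monotonicity over all blocklengths; your version suffices on its own for the statement as written and in fact closes the interpolation gap you correctly identify in the concatenation-only argument.
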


Next we have a closer look at the zero-error secrecy capacity of uncertain wiretap channels. To study the zero-error capacity of an uncertain channel $\mbf T:\mbf A\rightarrow2^{\mbf B}_*$, one associates to it the following graph $G(\mbf T)$: its vertex set equals $\mbf A$ and an edge is drawn between $a,a'\in\mbf A$ if $\mbf T(a)\cap\mbf T(a')\neq\varnothing$. In that case we write $a\sim a'$.

The graph $G(\mbf T^n)$ corresponding to the $n$-fold product channel $\mbf T^n$ (see \eqref{eq:nprod}) is the \textit{strong $n$-fold product of $G(\mbf T)$} denoted by $G(\mbf T)^n$, in particular $G(\mbf T^n)=G(\mbf T)^n$. Here for any graph $G$ with vertex set $\mbf A$, the strong product $G^2$ of $G$ with itself is defined as follows: The vertex set of $G^2$ is $\mbf A^2$ and $(a_1,a_2)\sim(a_1',a_2')$ if 1) $a_1\sim a_1'$ and $a_2=a_2'$ or 2) $a_2\sim a_2'$ and $a_1=a_1'$ or 3) $a_1\sim a_1'$ and $a_2\sim a_2'$.

Finding the zero-error capacity of $\mbf T$ now amounts to finding the asymptotic behavior as $n\rightarrow\infty$ of the sizes of maximal independent systems of the graphs $G(\mbf T^n)$, cf.\ \cite{KO}. We define an \textit{independent system} in a graph as a set $\{\mbf F(1),\ldots,\mbf F(M)\}$ of mutually disjoint subsets of the vertex set $\mbf A$ such that no two vertices $a,a'$ belonging to different subsets $\mbf F(m)\neq\mbf F(m')$ are connected by an edge. 

To treat uncertain wiretap channels $(\mbf T_B,\mbf T_C)$, we consider a hypergraph structure $H(\mbf T_C^n)$ induced on $\mbf A^n$ in addition to the graph structure $G(\mbf T_B^n)$. A hypergraph consists of a vertex set together with a set of subsets, called \textit{hyperedges}, of this vertex set. The vertex set of $H(\mbf T_C^n)$ equals $\mbf A^n$. Every hyperedge is generated by a $(c_t)_{t=1}^n\in\mbf C^n$: we set $e((c_t)_{t=1}^n):=\{(a_t)_{t=1}^n\in\mbf A^n:(c_t)_{t=1}^n\in\mbf T_C^n((a_t)_{t=1}^n)\}$.

It is easy to see that $H(\mbf T_C^n)$ is the \textit{$n$-fold square product} $H(\mbf T_C)^n$, cf.\ \cite{HOShyperprod}. For any hypergraph $H$ with vertex set $\mbf A$ and hyperedge set $\mc E\subset 2^{\mbf A}$, the square product $H^2$ of $H$ with itself is defined as follows: The vertex set of $H^2$ is $\mbf A^2$ and the hyperedge set equals $\mc E^2:=\{e\times e':e,e'\in\mc E\}$. 

A zero-error wiretap $(M,n)$-code $\mbf F$ then is nothing but a collection of disjoint subsets $\{\mbf F(1),\ldots,\mbf F(M)\}$ of $\mbf A^n$ satisfying the two following properties:
\begin{enumerate}
	\item It is an independent system for $G(\mbf T_B^n)$;
	\item For every hyperedge $e$ of $H(\mbf T_C^n)$ there exist at least two different $m,m'$ such that $e$ has nonempty intersection with both $\mbf F(m)$ and $\mbf F(m')$. 
\end{enumerate}

This (hyper-)graph theoretic language is applied in the proof of Theorem \ref{lem:TBinj}. The following very interesting example gives additional insight into the nature of general uncertain wiretap channels and their secrecy capacity.

\begin{example}\label{ex:superact}
\begin{figure}
\centering
	\begin{tikzpicture}[alphabet/.style={draw, rounded corners}]
		\node[alphabet] (Eve) {$\mbf C$};
		\node[alphabet, right = 2cm of Eve] (Alice) {$\mbf A$};
		\node[alphabet, right = 2cm of Alice] (Bob) {$\mbf B$};
		
		\node[below = .2cm of Alice] (a1) {$a_1$};
		\node[below = 1cm of Alice] (a2) {$a_2$};
		\node[below = 1.8cm of Alice] (a3) {$a_3$};
		\node[below = 2.6cm of Alice] (a4) {$a_4$};
		
		\node[below = 1cm of Eve] (c1) {$c_1$};
		\node[below = 1.8cm of Eve] (c2) {$c_2$};
		
		\node[below = .6cm of Bob] (b1) {$b_1$};
		\node[below = 1.4cm of Bob] (b2) {$b_2$};
		\node[below = 2.2cm of Bob] (b3) {$b_3$};
		
		\draw (a1) -- (b1);
		\draw (a1) -- (c1);
		\draw (a2) -- (b1);
		\draw (a2) -- (b2);
		\draw (a2) -- (c1);
		\draw (a2) -- (c2);
		\draw (a3) -- (b2);
		\draw (a3) -- (b3);
		\draw (a3) -- (c1);
		\draw (a3) -- (c2);
		\draw (a4) -- (b3);
		\draw (a4) -- (c2);
	\end{tikzpicture}
	\caption{The channel $(\mbf T_B,\mbf T_C)$ from Example \ref{ex:superact}. A line between $a_i$ and $b_j$ indicates that $b_j\in\mbf T_B(a_i)$, similar for $a_i$ and $c_j$.}\label{fig:superact}
\end{figure}
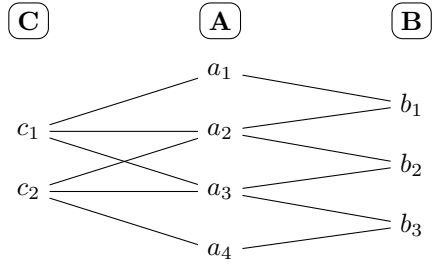

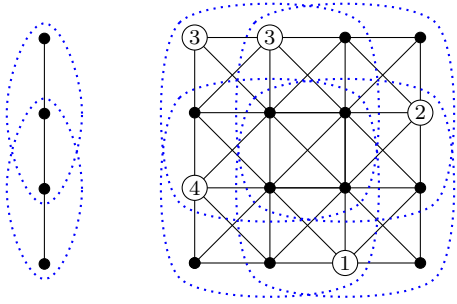
\begin{figure}
\centering
\subfloat{%
\begin{tikzpicture}[every node/.style={draw,circle,fill=black,minimum size=4pt, inner sep=0pt}]%
	\draw (0,0) node (x1) {}
		-- ++(0,1) node (x2) {}
		-- ++(0,1) node (x3) {}
		-- ++(0,1) node (x4) {};
		
	\node[draw=white, fill = white, below=.3cm of x1]  {};
		
	\draw [blue, dotted, thick] plot [smooth cycle, tension = 0.8] coordinates {(0,-.2) (.5,1) (0,2.2) (-.5,1)};
	\draw [blue, dotted, thick] plot [smooth cycle, tension = 0.8] coordinates {(0,.8) (.5,2) (0,3.2) (-.5,2)};
\end{tikzpicture}
}
\qquad
\subfloat{
	\begin{tikzpicture}[every node/.style={draw,circle,fill=black,minimum size=4pt, inner sep=0pt}]
	\draw (1,0) -- (1,3);
	\draw (2,0) -- (2,3);
	\draw (0,1) -- (3,1);
	\draw (0,2) -- (3,2);
	
	\draw (0,2) -- (1,3);
	\draw (0,1) -- (2,3);
	\draw (0,0) -- (3,3);
	\draw (1,0) -- (3,2);
	\draw (2,0) -- (3,1);
	
	\draw (0,2) -- (2,0);
	\draw (0,1) -- (1,0);
	\draw (0,3) -- (3,0);
	\draw (1,3) -- (3,1);
	\draw (2,3) -- (3,2);
	
	\draw (0,0) node (00) {} -- ++(1,0) node {} -- ++(1,0) node[draw, fill=white, shape=circle, minimum size=8pt, inner sep=1pt] {\footnotesize 1} -- ++(1,0) node {}
		-- ++(0,1) node {} -- ++(0,1) node[draw,fill = white, shape=circle, minimum size=8pt, inner sep=1pt] {\footnotesize 2} -- ++(0,1) node {}
		-- ++(-1,0) node {} -- ++(-1,0) node[draw, fill = white, shape=circle, minimum size=8pt, inner sep=1pt] {\footnotesize 3} -- ++(-1,0) node[draw,fill = white, shape=circle, minimum size=8pt, inner sep=1pt] {\footnotesize 3}
		-- ++(0,-1) node {} -- ++(0,-1) node[draw,fill = white, shape=circle, minimum size=8pt, inner sep=1pt] {\footnotesize 4} -- ++(0,-1) node {};
		
	\draw (1,1) node {} -- (2,1) node {} -- (2,2) node {} -- (1,2) node {};

	\draw [blue, dotted, thick] plot [smooth cycle, tension = 0.5] coordinates {(-.2,-.2) (-.2,2.2) (2.2,2.2) (2.2,-.2)};
	\draw [blue, dotted, thick] plot [smooth cycle, tension = 0.5] coordinates {(-.2,.8) (-.2,3.2) (2.2,3.2) (2.2,.8)};
	\draw [blue, dotted, thick] plot [smooth cycle, tension = 0.5] coordinates {(.8,-.2) (.8,2.2) (3.2,2.2) (3.2,-.2)};
	\draw [blue, dotted, thick] plot [smooth cycle, tension = 0.5] coordinates {(.8,.8) (.8,3.2) (3.2,3.2) (3.2,.8)};
\end{tikzpicture}
}
\caption{Left: $\mbf A$ with $G(\mbf T_B)$ and $H(\mbf T_C)$. Right: $\mbf A^2$ with $G(\mbf T_B^2)$ and $H(\mbf T_C^2)$. Vertices connected by a solid black line are connected in $G(\mbf T_B)$ or $G(\mbf T_B^2)$, respectively. Vertices within the boundary of a blue dotted line belong to the same hyperedge of $H(\mbf T_C)$ or $H(\mbf T_C^2)$, respectively.}\label{fig:superactgraphs}
\end{figure}
	Consider the wiretap channel $(\mbf T_B,\mbf T_C)$ from Fig.\ \ref{fig:superact}. $\mbf A$ with $G(\mbf T_B)$ and $H(\mbf T_C)$ is depicted on the left of Fig.\ \ref{fig:superactgraphs}, $\mbf A^2$ with $G(\mbf T_B^2)$ and $H(\mbf T_C^2)$ on its right. It is easy to check that there is no zero-error wiretap $(M,1)$-code for any $M\geq2$. On the other hand, a zero-error wiretap $(4,2)$-code exists by chooosing the codeword sets as indicated in Fig.\ \ref{fig:superactgraphs}. Therefore in the general case, in contrast to the situation in Lemma \ref{lem:TBinj}, there is no easy criterion an uncertain wiretap channel satisfies at blocklength 1 if and only if its zero-error secrecy capacity is positive. 
	
	This behavior of zero-error wiretap codes for general uncertain wiretap channels is remarkable when it is compared to the behavior of zero-error codes for uncertain channels: An uncertain channel $\mbf T$ has $C_0(\mbf T)>0$ if and only if there exists an independent system for $G(\mbf T)$ with size $\geq 2$. Similarly, a stochastic DMC has positive capacity if and only if its blocklength-1 transmission matrix does not have identical rows. For the secrecy capacity of stochastic wiretap channels, there is van Dijk's criterion \cite{vD} for positivity which concerns the blocklength-1 wiretap channel matrix and requires to check a certain function for concavity.
	
	Observe also that in order to obtain a $(4,2)$-code for the above channel, one message $m$ has to be encoded into a set with $\lvert\mbf F(m)\rvert\geq2$. 
\end{example}

A simpler example illustrating the necessity of codes whose encoding sets $\mbf F(m)$ are not all one-element sets is the following.

\begin{example}\label{ex:singletonnec}
	Consider the wiretap channel shown in Fig.\ \ref{fig:singletonnec}. If one only considered codes satisfying $\lvert\mbf F(m)\rvert=1$ for all messages $m$, then the maximal $M$ for which a zero-error wiretap $M$-code exists would be $M=2$, for example $\mbf F=\{\{a_1\},\{a_4\}\}$. $M=4$ is not possible because $\mbf T_B$ can only transmit 3 messages without error. For $M=3$, either $c_1$ or $c_2$ would be generated by only one message. 

On the other hand, if one takes the zero-error wiretap code $\mbf F=\{\{a_1\},\{a_2,a_3\},\{a_4\}\}$, then three messages can be distinguished at the intended receiver's output and every eavesdropper output is reached by two different messages. 
\end{example}

Moreover, examples can be constructed which show the following: If there exists a zero-error wiretap $(M,n)$-code, then it may be necessary to have codes with $\lvert\mbf F(m)\rvert\geq 2$ to also find a zero-error wiretap $(M',n)$-code for every $2\leq M'\leq M$.

\section{Proofs}

This section contains all the proofs. The first two subsections are devoted to the proof of Theorem \ref{thm:main}. Subsection \ref{subsect:prelim} contains the quantizer rule applied by the sensor and some basic lemmas which are needed in the analysis of the transmission scheme to be defined. The transmission scheme is defined and analyzed in Subsection \ref{subsect:transm}. The proofs of Lemmas \ref{lem:esterr} and \ref{lem:divcoeff} which are based on the transmission scheme defined in Subsection \ref{subsect:transm} are done in Subsection \ref{subsect:lemmas}. The proof of Theorem \ref{lem:TBinj} is contained in Subsection \ref{subsect:TBinj}, followed by the proof of Lemma \ref{lem:mittldist} in Subsection \ref{subsect:mittldist}.

\subsection{Proof of Theorem \ref{thm:main}: Preliminaries}\label{subsect:prelim}

The first choice to make is the quantizer used by the sensor. For sufficient generality, we assume the rule \eqref{eq:dynrec}, but that $x(0)\in I_0$ for some real interval $I_0$. Let $M\geq 2\in\mbb N$. Recursively define for $t\geq 1$ and $1\leq m\leq M$
\begin{align}
	[A(t),B(t)]&=\lambda I_{t-1}+\left[-\frac{\Omega}{2},\frac{\Omega}{2}\right],\label{eq:A(n),B(n)}\\
	P_{m,t}&=A(t)+(B(t)-A(t))\left[\frac{m-1}{M},\frac{m}{M}\right],\label{eq:Pmn}\\
	m_{t}&=m\qquad\text{if }x(t)\in P_{m,t},\\
	I_t&=P_{m_t,t}.\label{eq:In}%\\
%	\hat x(t)&=A(t)+\frac{2m_{t}-1}{2M}(B(t)-A(t)).\label{eq:hatx(n)}
\end{align}
In the definition of $m_{t}$, an uncertain mapping is applied to associate $x(t)$ to one of the two possible values if it lies on the boundary between two partition intervals $P_{m,t},P_{m+1,t}$.

For every $t\in\mbb N$, the interval $I_t$ is the set of system states which are possible at time $t$ according to the sequence $(m_i)_{i=t}^n$. The interval $[A(t+1),B(t+1)]$ is the set of states the system could be in at time $t+1$ given that its state at time $t$ is contained in $I_t$. The sets $P_{m,t+1}:1\leq m\leq M$ form an equal-sized partition of $[A(t+1),B(t+1)]$, and $m_{t+1}$ is the index of the partition atom actually containing the system state. Clearly, every path $(x(t))_{t=0}^\infty$ generates an infinite sequence $(m_t)_{t=1}^\infty$.

The next lemma is needed in the analysis of the intended receiver's estimation error and proved by induction over the recursion \eqref{eq:A(n),B(n)}-\eqref{eq:In}.

\begin{lem}\label{thm:estdyn}
	For every $t\in\mbb N$ and $1\leq m\leq M$,
	\begin{align}\label{eq:intlength}
		\lvert P_{m,t}\rvert
		=\begin{cases}
			\left(\frac{\lambda}{M}\right)^t\left(\lvert I_0\rvert-\frac{\Omega}{M-\lambda}\right)+\frac{\Omega}{M-\lambda}&\text{if }\lambda\neq M,\\
			\lvert I_0\rvert+t\frac{\Omega}{M}&\text{if }\lambda=M.
		\end{cases}
	\end{align}
	In particular, $\sup_t\lvert I_t\rvert<\infty$ if and only if $\lambda<M$. In that case
	\[
		\sup_t\lvert I_t\rvert=\max\left\{\lvert I_0\rvert,\frac{\Omega}{M-\lambda}\right\}.
	\]
\end{lem}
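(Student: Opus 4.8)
The plan is to reduce the statement to a scalar affine recursion for the common length of the partition atoms and then solve it explicitly. First I would note that by \eqref{eq:Pmn} the atoms $P_{1,t},\dots,P_{M,t}$ partition $[A(t),B(t)]$ into $M$ pieces of equal length, so that $\lvert P_{m,t}\rvert=(B(t)-A(t))/M$ is independent of $m$; since $I_t=P_{m_t,t}$ by \eqref{eq:In}, this common value is also $\lvert I_t\rvert$. Writing $\ell_t:=\lvert P_{m,t}\rvert=\lvert I_t\rvert$ and $\ell_0:=\lvert I_0\rvert$, it then suffices to compute $\ell_t$.

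Next I would set up the recursion. By \eqref{eq:A(n),B(n)} the interval $[A(t),B(t)]$ is the Minkowski sum of $\lambda I_{t-1}$, of length $\lambda\lvert I_{t-1}\rvert$, and of $[-\Omega/2,\Omega/2]$, of length $\Omega$. Hence $B(t)-A(t)=\lambda\ell_{t-1}+\Omega$ and
\[
	\ell_t=\frac{\lambda}{M}\,\ell_{t-1}+\frac{\Omega}{M},\qquad t\geq 1.
\]
I would solve this in the two cases of \eqref{eq:intlength}. If $\lambda\neq M$, the affine map $x\mapsto(\lambda/M)x+\Omega/M$ has the unique fixed point $\ell^\ast:=\Omega/(M-\lambda)$, so $\ell_t-\ell^\ast=(\lambda/M)(\ell_{t-1}-\ell^\ast)$ and, by induction, $\ell_t=(\lambda/M)^t(\ell_0-\ell^\ast)+\ell^\ast$, the first branch. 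If $\lambda=M$, the recursion reads $\ell_t=\ell_{t-1}+\Omega/M$, giving $\ell_t=\ell_0+t\,\Omega/M$, the second branch. Both formulas are valid at $t=0$ as well, returning $\ell_0=\lvert I_0\rvert$.

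Finally I would read off the asymptotics. If $\lambda=M$ then $\ell_t\to\infty$ because $\Omega>0$; if $\lambda>M$ then $\ell^\ast<0\leq\ell_0$, so $\ell_0-\ell^\ast>0$ and $(\lambda/M)^t\to\infty$ force $\ell_t\to\infty$. Hence boundedness requires $\lambda<M$. When $\lambda<M$ we have $0<\lambda/M<1$, so $\ell_t\to\ell^\ast=\Omega/(M-\lambda)>0$; moreover
\[
	\ell_t-\ell_{t-1}=\Bigl(\frac{\lambda}{M}\Bigr)^{t-1}\Bigl(1-\frac{\lambda}{M}\Bigr)(\ell^\ast-\ell_0)
\]
has constant sign, so $(\ell_t)$ is monotone and approaches $\ell^\ast$ from the side of $\ell_0$. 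Thus the supremum over $t\geq 0$ equals $\lvert I_0\rvert$ when $\ell_0\geq\ell^\ast$ and equals $\ell^\ast$ when $\ell_0\leq\ell^\ast$, i.e.\ $\sup_t\lvert I_t\rvert=\max\{\lvert I_0\rvert,\Omega/(M-\lambda)\}$ in every case.

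Since the underlying recursion is elementary, there is no real obstacle here; the only point deserving care is the bookkeeping in the last paragraph, where one must compare the initial length $\ell_0=\lvert I_0\rvert$ with the fixed point $\ell^\ast$, invoke monotonicity, and crucially let the supremum range over $t\geq 0$ (so that the initial interval itself is admissible) in order to recover the maximum of the two quantities. The divergence cases $\lambda=M$ and $\lambda>M$ must also be argued separately, as they blow up for different structural reasons.
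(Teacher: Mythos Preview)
Your proposal is correct and follows essentially the same route as the paper: both derive the scalar recursion $\lvert P_{m,t+1}\rvert=(\lambda/M)\lvert I_t\rvert+\Omega/M$ from \eqref{eq:A(n),B(n)}--\eqref{eq:In} and then solve it. The only cosmetic differences are that the paper verifies the closed form by direct induction while you use the fixed-point substitution $\ell_t-\ell^\ast=(\lambda/M)(\ell_{t-1}-\ell^\ast)$, and that you spell out the monotonicity argument for the supremum and the divergence in the cases $\lambda\geq M$, which the paper leaves implicit.
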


\begin{proof}
Write $I_t:=[I_{t,\min},I_{t,\max}]$ for every $t\in\mbb N$. Then note that by \eqref{eq:A(n),B(n)} 
	\begin{equation}\label{eq:AnBnIn}
		[A(t+1),B(t+1)]=\left[\lambda I_{t,\min}-\frac{\Omega}{2},\lambda I_{t,\max}+\frac{\Omega}{2}\right],
	\end{equation}
	which implies that $B(t+1)-A(t+1)=\lambda\lvert I_t\rvert+\Omega$. Hence by \eqref{eq:In} and \eqref{eq:Pmn}
	\begin{equation}\label{eq:Pmnlength}
		\lvert P_{m,t+1}\rvert=\frac{B(t+1)-A(t+1)}{M}=\frac{\lambda}{M}\lvert I_t\rvert+\frac{\Omega}{M}.
	\end{equation}
	Next \eqref{eq:intlength} is established by induction over $t$, using \eqref{eq:Pmnlength}. It is simple if $\lambda=M$. Now assume $\lambda\neq M$. Clearly the statement if true for $t=0$. Using the induction hypothesis and \eqref{eq:Pmnlength}, we then get
	\begin{align*}
		\lvert P_{m,t+1}\rvert&=\frac{\lambda}{M}\lvert I_t\rvert+\frac{\Omega}{M}\\
		&=\frac{\lambda}{M}\left(\left(\frac{\lambda}{M}\right)^t\left(\lvert I_0\rvert-\frac{\Omega}{M-\lambda}\right)+\frac{\Omega}{M-\lambda}\right)+\frac{\Omega}{M}\\
		&=\left(\frac{\lambda}{M}\right)^{t+1}\left(\lvert I_0\rvert-\frac{\Omega}{M-\lambda}\right)+\frac{\Omega}{M-\lambda}\left(\frac{\lambda}{M}+\frac{M-\lambda}{M}\right)\\
		&=\left(\frac{\lambda}{M}\right)^{t+1}\left(\lvert I_0\rvert-\frac{\Omega}{M-\lambda}\right)+\frac{\Omega}{M-\lambda}.
	\end{align*}
	This completes the proof.
\end{proof}

Denote by $\hat x(t)$ the mid point of $I_t$ for $t\geq 0$. For the analysis of the diameter of the set of paths compatible with the eavesdropper's outputs, we first derive a recursion formula for the sequence $(\hat x(t))_{t=1}^\infty$ given a sequence of partition indices $(m_t)_{t=1}^\infty$. 

\begin{lem}\label{thm:mpdyn}
	Let $M\in\mbb N$ and for every $t$ let $1\leq m_t\leq M$. Let $\hat x(t)$ be the mid point of $I_t=P_{m_t,t}$ for $t\in\mbb N$. Define
	\[
		\sigma_t:=\sum_{i=0}^t\left(\frac{\lambda}{M}\right)^t=\frac{M}{M-\lambda}\left(1-\left(\frac{\lambda}{M}\right)^{t+1}\right).
	\] 
	Then for every $t=0,1,2,\ldots$
	\begin{align}
		\hat x(t)&=\lambda^t\left\{\hat x(0)
		-\frac{1}{2}\sum_{i=0}^{t-1}\left(\frac{\Omega\sigma_i}{\lambda^{i+1}}+\frac{\lvert I_0\rvert}{M^i}\right)\left(1-\frac{2m_{i+1}-1}{M}\right)\right\}.
	\end{align}
\end{lem}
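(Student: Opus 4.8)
The plan is to first turn the definition of the midpoint into a one-step recursion expressing $\hat x(t)$ through $\hat x(t-1)$ and the index $m_t$, then to unroll this linear recursion into an explicit sum over $i=0,\dots,t-1$, and finally to identify the resulting coefficients with those in the claim by feeding in the interval-length formula of Lemma~\ref{thm:estdyn}. First I would compute the midpoint of $I_t=P_{m_t,t}$. Since all atoms $P_{1,t},\dots,P_{M,t}$ have the same length, the midpoint of $P_{m,t}$ is $A(t)+(B(t)-A(t))\tfrac{2m-1}{2M}$ by \eqref{eq:Pmn}. By \eqref{eq:AnBnIn} the interval $[A(t),B(t)]$ is symmetric about $\lambda\hat x(t-1)$, because adding the symmetric perturbation $[-\Omega/2,\Omega/2]$ to $\lambda I_{t-1}$ leaves the center at $\lambda$ times the center of $I_{t-1}$; hence $\tfrac{1}{2}(A(t)+B(t))=\lambda\hat x(t-1)$. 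Writing $A(t)=\lambda\hat x(t-1)-\tfrac{1}{2}(B(t)-A(t))$ and substituting, I obtain
\begin{equation}\label{eq:mpstep}
	\hat x(t)=\lambda\hat x(t-1)-\frac{B(t)-A(t)}{2}\left(1-\frac{2m_t-1}{M}\right).
\end{equation}

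Next I would bring $B(t)-A(t)$ into closed form. Since $B(t)-A(t)=M\lvert P_{m,t}\rvert$ for every $m$ (all atoms are equal), Lemma~\ref{thm:estdyn} gives, after collecting terms,
\[
	B(t)-A(t)=\Omega\,\sigma_{t-1}+\frac{\lvert I_0\rvert\,\lambda^t}{M^{t-1}}.
\]
In the case $\lambda\neq M$ the $\Omega$-part reproduces exactly the closed form $\sigma_{t-1}=\frac{M}{M-\lambda}\bigl(1-(\lambda/M)^{t}\bigr)$, whereas for $\lambda=M$ it equals $\Omega t$; since $\sigma_{t-1}$ is defined as the finite geometric sum $\sum_{i=0}^{t-1}(\lambda/M)^i$, this single identity covers both cases and no case distinction survives into the final statement.

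Then I would unroll \eqref{eq:mpstep}. With the abbreviation $u_i:=1-\tfrac{2m_i-1}{M}$, the inhomogeneous linear recursion telescopes to
\[
	\hat x(t)=\lambda^t\hat x(0)-\sum_{i=0}^{t-1}\lambda^{t-1-i}\,\frac{B(i+1)-A(i+1)}{2}\,u_{i+1}.
\]
Factoring $\lambda^t$ out of the sum and inserting the closed form for $B(i+1)-A(i+1)$ converts $\lambda^{-i-1}\tfrac{1}{2}(B(i+1)-A(i+1))$ into $\tfrac{1}{2}\bigl(\tfrac{\Omega\sigma_i}{\lambda^{i+1}}+\tfrac{\lvert I_0\rvert}{M^i}\bigr)$, which is precisely the bracketed coefficient of $u_{i+1}$ in the asserted formula. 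This finishes the argument.

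I expect the main difficulty to be bookkeeping rather than anything conceptual: getting the sign and the factor $1-\tfrac{2m_t-1}{M}$ right in \eqref{eq:mpstep} (an error here typically shows up as a reflection $m_t\mapsto M+1-m_t$), and then matching the algebra of $B(t)-A(t)$ to the $\sigma_i$ notation so that the two cases of Lemma~\ref{thm:estdyn} collapse into the single uniform expression claimed. One could instead verify the closed form directly by induction on $t$ using \eqref{eq:mpstep}, but unrolling makes the provenance of each summand transparent.
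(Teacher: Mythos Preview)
Your proposal is correct and follows essentially the same route as the paper: derive the one-step recursion \eqref{eq:mpstep} (the paper obtains the equivalent form $\hat x(t+1)=\lambda\hat x(t)-\tfrac{\lambda|P_{1,t}|+\Omega}{2}\bigl(1-\tfrac{2m_{t+1}-1}{M}\bigr)$ by direct computation rather than your symmetry observation), substitute the interval-length formula from Lemma~\ref{thm:estdyn}, and then close the argument---you by unrolling the recursion, the paper by induction on $t$. These are stylistic variants of the same proof.
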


\begin{proof}
	For $t\in\mbb N$ and $1\leq m\leq M$ we set $P_{m,t}=[P_{m,t,\min},P_{m,t,\max}]$. By definition, $\hat x(t+1)=P_{m_{t+1},t+1,\min}+\lvert P_{1,t+1}\rvert/2$ (using $\lvert P_{1,t+1}\rvert=\lvert P_{m,t+1}\rvert$ for all $1\leq m\leq M$). Then 
	\begin{align}
		\hat x(t+1)&=A(t+1)+(m_{t+1}-\frac{1}{2})\lvert P_{1,t+1}\rvert&&\text{by \eqref{eq:Pmn}}\notag\\
		&=\lambda P_{m_t,t,\min}-\frac{\Omega}{2}+(m_{t+1}-\frac{1}{2})\lvert P_{1,t+1}\rvert&&\text{by \eqref{eq:A(n),B(n)}}\notag\\
		&=\lambda\hat x(t)-\frac{\lambda\lvert P_{1,t}\rvert}{2}-\frac{\Omega}{2}+(m_{t+1}-\frac{1}{2})\lvert P_{1,t+1}\rvert&&\text{by def. of }\hat x(t)\notag\\
		&=\lambda\hat x(t)-\frac{\lambda\lvert P_{1,t}\rvert}{2}-\frac{\Omega}{2}+(m_{t+1}-\frac{1}{2})\left(\frac{\lambda}{M}\lvert P_{1,t}\rvert+\frac{\Omega}{M}\right)&&\text{by \eqref{eq:Pmnlength}}\notag\\
		&=\lambda\hat x(t)-\frac{\lambda\lvert P_{1,t}\rvert+\Omega}{2}\left(1-\frac{2m_{t+1}-1}{M}\right).\label{eq:hatxrec}
	\end{align}
	By Lemma \ref{thm:estdyn} and \eqref{eq:hatxrec}
	\begin{align}
		\hat x(t+1)&=\lambda\hat x(t)-\left(\frac{\lambda^{t+1}\lvert I_0\rvert}{2M^t}-\frac{\lambda^{t+1}}{2M^t}\frac{\Omega}{M-\lambda}+\frac{\lambda}{2}\frac{\Omega}{M-\lambda}+\frac{\Omega}{2}\right)\left(1-\frac{2m_{t+1}-1}{M}\right)\notag\\
		&=\lambda\hat x(t)-\left(\frac{\lambda^{t+1}\lvert I_0\rvert}{2M^t}-\frac{\Omega}{2}\frac{\lambda^{t+1}-\lambda M^t-(M-\lambda)M^t}{M^t(M-\lambda)}\right)\left(1-\frac{2m_{t+1}-1}{M}\right)\notag\\
		&=\lambda\hat x(t)-\left(\frac{\lambda^{t+1}\lvert I_0\rvert}{2M^t}-\frac{\Omega}{2}\frac{M}{M-\lambda}\frac{\lambda^{t+1}-M^{t+1}}{M^{t+1}}\right)\left(1-\frac{2m_{t+1}-1}{M}\right)\notag\\
		&=\lambda\hat x(t)-\left(\frac{\lambda^{t+1}}{M^t}\frac{\lvert I_0\rvert}{2}+\frac{\Omega}{2}\sigma_t\right)\left(1-\frac{2m_{t+1}-1}{M}\right).\label{eq:step}
	\end{align}
	Now we use induction to prove the claim. It is certainly correct for $t=0$. Assume the claim has been proven for all integers up to $t$. We obtain from \eqref{eq:step}
	\begin{align*}
		\hat x(t+1)&=\lambda^{t+1}\left\{\hat x(0)
		-\frac{1}{2}\sum_{i=0}^{t-1}\left(\frac{\Omega\sigma_i}{\lambda^{i+1}}+\frac{\lvert I_0\rvert}{M^i}\right)\left(1-\frac{2m_{i+1}-1}{M}\right)\right\}\\
		&\qquad\qquad-\frac{\lambda^{t+1}}{2}\left\{\left(\frac{\Omega\sigma_{t}}{\lambda^{t+1}}+\frac{\lvert I_0\rvert}{M^t}\right)\left(1-\frac{2m_{t+1}-1}{M}\right)\right\}\\
		&=\lambda^{t+1}\left\{\hat x(0)
		-\frac{1}{2}\sum_{i=0}^{t}\left(\frac{\Omega\sigma_i}{\lambda^{i+1}}+\frac{\lvert I_0\rvert}{M^i}\right)\left(1-\frac{2m_{i+1}-1}{M}\right)\right\}.
	\end{align*}
	This completes the proof.
\end{proof}

Next assume that we have two systems obeying \eqref{eq:dynrec}. The paths of one of them start in an interval $I_0$ and those of the other in an interval $I_0'$ with $\lvert I_0\rvert=\lvert I'_0\rvert$. The same quantizer rules \eqref{eq:A(n),B(n)}-\eqref{eq:In} are applied for both systems, generating sequences $(m_t)_{t=1}^\infty, (I_t)_{t=0}^\infty$ and $(m_t')_{t=1}^\infty,(I_t')_{t=0}^\infty$, respectively. For $t\geq 0$, denote by $\hat x(t)$ the mid point of $I_t$ and by $\hat x'(t)$ that of $I_t'$. The next two lemmas will be used in the security analysis of the scheme we are going to define. 

\begin{lem}\label{thm:estandsecwith}
	Let $L,M\geq 2$ and for every $t$ let $1\leq m_t'<m_t\leq M$ with $m_t-m_t'\geq L-1$. Then 
	\[
		\liminf_{t\rightarrow\infty}\frac{\hat x(t)-\hat x'(t)}{\lambda^t}\geq\hat x(0)-\hat x'(0)+\frac{L-1}{M-1}\left(\frac{\Omega}{\lambda-1}+\lvert I_0\rvert\right).
	\]
\end{lem}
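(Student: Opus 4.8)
The plan is to start from the closed-form expression for the midpoint sequence provided by Lemma \ref{thm:mpdyn} and subtract the two copies, one for the unprimed and one for the primed system. Since $\lvert I_0\rvert=\lvert I_0'\rvert$, the two instances of the lemma share identical coefficients $\Omega\sigma_i/\lambda^{i+1}+\lvert I_0\rvert/M^i$, so after dividing by $\lambda^t$ the difference becomes
\[
	\frac{\hat x(t)-\hat x'(t)}{\lambda^t}=\hat x(0)-\hat x'(0)-\frac{1}{2}\sum_{i=0}^{t-1}\left(\frac{\Omega\sigma_i}{\lambda^{i+1}}+\frac{\lvert I_0\rvert}{M^i}\right)\left[\left(1-\frac{2m_{i+1}-1}{M}\right)-\left(1-\frac{2m_{i+1}'-1}{M}\right)\right].
\]
The bracketed difference collapses to $-2(m_{i+1}-m_{i+1}')/M$, which cancels the leading $-\tfrac12$ and turns each summand into $(m_{i+1}-m_{i+1}')/M$ times a nonnegative coefficient.

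Next I would observe that each coefficient $\Omega\sigma_i/\lambda^{i+1}+\lvert I_0\rvert/M^i$ is nonnegative and that by hypothesis $m_{i+1}-m_{i+1}'\geq L-1\geq 1$. Replacing $m_{i+1}-m_{i+1}'$ by its lower bound $L-1$ term by term yields
\[
	\frac{\hat x(t)-\hat x'(t)}{\lambda^t}\geq\hat x(0)-\hat x'(0)+\frac{L-1}{M}\sum_{i=0}^{t-1}\left(\frac{\Omega\sigma_i}{\lambda^{i+1}}+\frac{\lvert I_0\rvert}{M^i}\right).
\]
Because the summands are nonnegative, the right-hand side is monotone increasing in $t$, so its $\liminf$ as $t\to\infty$ equals its limit, namely $\hat x(0)-\hat x'(0)$ plus $\tfrac{L-1}{M}$ times the full infinite series.

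The only genuine computation is to evaluate that series in closed form. The geometric part gives $\sum_{i=0}^\infty\lvert I_0\rvert/M^i=\lvert I_0\rvert\,M/(M-1)$. For the $\Omega$ part I would insert $\sigma_i=\sum_{j=0}^i(\lambda/M)^j$ and interchange the order of summation (legitimate by Tonelli, all terms nonnegative); after the $\lambda^j$ factors cancel against $\lambda^{-i}$ the double sum reduces to $\tfrac{1}{\lambda-1}\cdot\tfrac{M}{M-1}$, so that $\tfrac{1}{M}\sum_{i=0}^\infty\sigma_i/\lambda^{i+1}=\tfrac{1}{(M-1)(\lambda-1)}$. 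Combining the two pieces,
\[
	\frac{L-1}{M}\sum_{i=0}^\infty\left(\frac{\Omega\sigma_i}{\lambda^{i+1}}+\frac{\lvert I_0\rvert}{M^i}\right)=\frac{L-1}{M-1}\left(\frac{\Omega}{\lambda-1}+\lvert I_0\rvert\right),
\]
which is exactly the claimed bound.

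I expect the main obstacle to be the bookkeeping in this series evaluation rather than anything conceptual: one must carefully justify the swap of the double sum and verify that the $\lambda^j$ terms cancel, so that the computation goes through uniformly without separating the case $\lambda=M$ (where $\sigma_i=i+1$) from the case $\lambda\neq M$. Convergence of every geometric series used is guaranteed by $\lambda>1$ and $M\geq 2$, and the monotonicity argument is what lets the pointwise lower bound survive passage to the $\liminf$.
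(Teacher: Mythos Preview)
Your proposal is correct and follows essentially the same route as the paper: subtract the two instances of Lemma~\ref{thm:mpdyn}, use $m_{i+1}-m'_{i+1}\geq L-1$ termwise against the nonnegative coefficients, and evaluate the resulting series. The only cosmetic difference is that the paper substitutes the closed form $\sigma_i/\lambda^{i+1}=\tfrac{M}{M-\lambda}(\lambda^{-(i+1)}-M^{-(i+1)})$ and sums the finite geometric series before letting $t\to\infty$, whereas you pass to the infinite series immediately and interchange summation via Tonelli; your route has the mild bonus of not needing the implicit assumption $\lambda\neq M$ in the intermediate steps.
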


\begin{proof}
	By Lemma \ref{thm:mpdyn}, for any $n\in\mbb N$,
	\begin{align}
		\hat x(t)-\hat x'(t)&=\lambda^t\biggl(\hat x(0)-\hat x'(0)+\frac{\Omega}{M}\sum_{i=0}^{t-1}\frac{\sigma_i}{\lambda^{i+1}}(m_{i+1}-m'_{i+1})\label{eq:hatdiff1}\\&\qquad\qquad+\lvert I_0\rvert\sum_{i=0}^{t-1}\frac{1}{M^{i+1}}(m_{i+1}-m'_{i+1})\biggr).\notag
	\end{align}
	Observe that 
	\begin{equation}\label{eq:sigmadurch}
		\frac{\sigma_i}{\lambda^{i+1}}=\frac{M}{M-\lambda}\left(\frac{1}{\lambda^{i+1}}-\frac{1}{M^{i+1}}\right)
	\end{equation}
	and recall that $m_t-m'_t\geq L-1$ for every $t$. Hence \eqref{eq:hatdiff1} can be lower-bounded by
	\begin{align*}
		&\lambda^t\biggl(\hat x(0)-\hat x'(0)+\frac{\Omega(L-1)}{M-\lambda}\sum_{i=0}^{t-1}\left(\frac{1}{\lambda^{i+1}}-\frac{1}{M^{i+1}}\right)+\lvert I_0\rvert(L-1)\sum_{i=0}^{t-1}\frac{1}{M^{i+1}}\biggr)\\
		&=\lambda^t\biggl(\hat x(0)-\hat x'(0)+\frac{\Omega(L-1)}{M-\lambda}\left(\frac{1-\lambda^{-t}}{\lambda-1}-\frac{1-M^{-t}}{M-1}\right)+\lvert I_0\rvert(L-1)\frac{1-M^{-t}}{M-1}\biggr).
	\end{align*}
	The theorem is proven once one observes that as $t\rightarrow\infty$,
	\begin{align*}
		&\frac{\Omega(L-1)}{M-\lambda}\left(\frac{1-\lambda^{-t}}{\lambda-1}-\frac{1-M^{-t}}{M-1}\right)+\lvert I_0\rvert(L-1)\frac{1-M^{-t}}{M-1}\\
		&\longrightarrow\frac{\Omega(L-1)}{M-\lambda}\left(\frac{1}{\lambda-1}-\frac{1}{M-1}\right)+\lvert I_0\rvert(L-1)\frac{1}{M-1}\\
		&=\frac{\Omega(L-1)}{M-\lambda}\frac{M-\lambda}{(M-1)(\lambda-1)}+\lvert I_0\rvert(L-1)\frac{1}{M-1}\\
		&=\frac{L-1}{M-1}\left(\frac{\Omega}{\lambda-1}+\lvert I_0\rvert\right).
	\end{align*}
\end{proof}

\begin{lem}\label{thm:estandsecwithout}
	Let $M\in\mbb N$ and for every $t$ let $1\leq m_t,m_t'\leq M$. If 
	\begin{equation}\label{eq:voraussohne}
		\lvert\hat x(0)-\hat x'(0)\rvert>\frac{\Omega}{\lambda-1}+\lvert I_0\rvert,
	\end{equation}
	then for every $t=1,2,\ldots$
	\[
		\liminf_{t\rightarrow\infty}\frac{\lvert\hat x(t)-\hat x'(t)\rvert}{\lambda^t}\geq\lvert\hat x(0)-\hat x'(0)\rvert-\frac{\Omega}{\lambda-1}-\lvert I_0\rvert.
	\]
\end{lem}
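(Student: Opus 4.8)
The plan is to reuse the exact difference formula that already underlies the proof of Lemma \ref{thm:estandsecwith}. By Lemma \ref{thm:mpdyn} applied to both systems and then subtracting, exactly as in \eqref{eq:hatdiff1}, dividing through by $\lambda^t$ gives
\[
	\frac{\hat x(t)-\hat x'(t)}{\lambda^t}=\hat x(0)-\hat x'(0)+\sum_{i=0}^{t-1}\left(\frac{\Omega\sigma_i}{\lambda^{i+1}}+\frac{\lvert I_0\rvert}{M^i}\right)\frac{m_{i+1}-m'_{i+1}}{M}.
\]
Writing $D:=\hat x(0)-\hat x'(0)$, $B:=\Omega/(\lambda-1)+\lvert I_0\rvert$, and letting $S_t$ denote the sum on the right, the whole statement reduces to showing that the perturbation $S_t$ satisfies $\lvert S_t\rvert\le B$ uniformly in $t$, after which the reverse triangle inequality finishes the argument.

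To get this uniform bound I would first use that $1\le m_{i+1},m'_{i+1}\le M$ forces $\lvert m_{i+1}-m'_{i+1}\rvert\le M-1$. Since each coefficient $\Omega\sigma_i/\lambda^{i+1}+\lvert I_0\rvert/M^i$ is strictly positive (this follows directly from \eqref{eq:sigmadurch}; in the regime $M>\lambda$ relevant for reliability, cf.\ Lemma \ref{thm:estdyn}, both factors of the first term share the same sign), we obtain
\[
	\lvert S_t\rvert\le\frac{M-1}{M}\sum_{i=0}^{t-1}\left(\frac{\Omega\sigma_i}{\lambda^{i+1}}+\frac{\lvert I_0\rvert}{M^i}\right).
\]
The right-hand side is precisely the lower-bounding quantity appearing in the proof of Lemma \ref{thm:estandsecwith} specialized to $L-1=M-1$, and that computation already shows it converges to $B$ as $t\to\infty$. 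Because every summand is positive, the partial sums increase monotonically to $B$, so in fact $\lvert S_t\rvert\le B$ for every finite $t$, not merely in the limit.

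The conclusion then follows at once: for all $t$,
\[
	\frac{\lvert\hat x(t)-\hat x'(t)\rvert}{\lambda^t}=\lvert D+S_t\rvert\ge\lvert D\rvert-\lvert S_t\rvert\ge\lvert D\rvert-B,
\]
so that $\liminf_{t\to\infty}\lvert\hat x(t)-\hat x'(t)\rvert/\lambda^t\ge\lvert\hat x(0)-\hat x'(0)\rvert-\Omega/(\lambda-1)-\lvert I_0\rvert$. The hypothesis \eqref{eq:voraussohne} is invoked only to guarantee that this lower bound is strictly positive, which is exactly what the security analysis needs; the inequality itself holds regardless.

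The one genuinely delicate point I expect is the uniform-in-$t$ bound $\lvert S_t\rvert\le B$. It is tempting to quote only the limiting value $B$ from Lemma \ref{thm:estandsecwith}, but to transfer it to every finite $t$ one must additionally note the monotonicity of the partial sums, i.e.\ the positivity of the coefficients coming from \eqref{eq:sigmadurch}. Everything else is a verbatim reuse of the algebra already carried out for Lemma \ref{thm:estandsecwith}, combined with the reverse triangle inequality, so no new computation is required.
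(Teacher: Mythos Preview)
Your argument is correct and follows essentially the same route as the paper: apply Lemma~\ref{thm:mpdyn} to both systems, subtract to get the sum $S_t$, bound $\lvert S_t\rvert$ via $\lvert m_{i+1}-m'_{i+1}\rvert\le M-1$, and use the reverse triangle inequality. The only difference is that you exploit the positivity of the summands to obtain the uniform bound $\lvert S_t\rvert\le B$ for every $t$, whereas the paper is content with $\limsup_t\lvert S_t\rvert\le B$ (its closed form \eqref{eq:compl} converges to $B$), which already suffices for the $\liminf$ conclusion; your extra monotonicity remark is a harmless strengthening. One small cleanup: the positivity of $\Omega\sigma_i/\lambda^{i+1}$ does not need the hypothesis $M>\lambda$, since $\sigma_i=\sum_{j=0}^i(\lambda/M)^j>0$ directly, so your parenthetical restriction can be dropped.
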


\begin{proof}
By Lemma \ref{thm:estdyn}, 
\begin{equation}\label{eq:estdiff}
	\lvert\hat x(t)-\hat x'(t)\rvert=\lambda^t\left\lvert (\hat x(0)-\hat x'(0))+\frac{1}{M}\sum_{i=0}^{t-1}\left(\frac{\Omega\sigma_i}{\lambda^{i+1}}+\frac{\lvert I_0\rvert}{M^i}\right)(m_{i+1}-m'_{i+1})\right\rvert.
\end{equation}
By the triangle inequality, the absolute value term on the right-hand side of \eqref{eq:estdiff} is lower-bounded by
\begin{equation}\label{eq:triang}
	\left\lvert\lvert\hat x(0)-\hat x'(0)\rvert-\biggl\lvert\frac{1}{M}\sum_{i=0}^{t-1}\left(\frac{\Omega\sigma_i}{\lambda^{i+1}}+\frac{\lvert I_0\rvert}{M^i}\right)(m_{i+1}-m'_{i+1})\biggr\rvert\right\rvert.
\end{equation}
Using \eqref{eq:sigmadurch},
\begin{align}
	&\biggl\lvert\frac{1}{M}\sum_{i=0}^{t-1}\left(\frac{\Omega\sigma_i}{\lambda^{i+1}}+\frac{\lvert I_0\rvert}{M^i}\right)(m_i-m'_i)\biggr\rvert\\
	&\leq\frac{M-1}{M}\sum_{i=0}^{t-1}\left(\frac{\Omega M}{M-\lambda}\left(\frac{1}{\lambda^{i+1}}-\frac{1}{M^{i+1}}\right)+\frac{\lvert I_0\rvert}{M^i}\right)\notag\\
	&=(M-1)\left\{\frac{\Omega}{M-\lambda}\frac{1-\lambda^{-t}}{\lambda-1}+\left(\lvert I_0\rvert-\frac{\Omega}{M-\lambda}\right)\frac{1-M^{-t}}{M-1}\right\}\notag\\
	&=\frac{\Omega}{M-\lambda}\frac{M-1}{\lambda-1}\left(1-\frac{1}{\lambda^t}\right)+\left(\lvert I_0\rvert-\frac{\Omega}{M-\lambda}\right)\left(1-\frac{1}{M^t}\right).\label{eq:compl}
\end{align}
As $t$ tends to infinity, \eqref{eq:compl} converges to 
\[
	\frac{\Omega}{M-\lambda}\left(\frac{M-1}{\lambda-1}-1\right)+\lvert I_0\rvert=\frac{\Omega}{\lambda-1}+\lvert I_0\rvert.
\]
This proves the lemma.
\end{proof}

\subsection{Proof of Theorem \ref{thm:main}: Transmission Schemes}\label{subsect:transm}

For any $n\geq 1$, let us introduce the \textit{$n$-sampled system}
\begin{align*}
	x\e{n}(k+1)&=\lambda^nx^{(n)}(k)+w^{(n)}(k),\\
	x\e{n}(0)&=0,
\end{align*}
where $w^{(n)}(k)$ is a nonstochastic disturbance in the range $[-\tilde\Omega^{(n)}/2,\tilde\Omega^{(n)}/2]$ (cf.\ \eqref{eq:finhorizbound}). The $n$-sampled system describes the system \eqref{eq:dynsyst} at the points $0,n,2n,\ldots$

Let us first assume that $C_0(\mbf T_B,\mbf T_C)\leq\log\lambda$. Choose $n_1,M_1$ such that $2\leq M_1<\lambda^{n_1}$ and $M_1\leq N_{(\mbf T_B,\mbf T_C)}(n_1)$ and choose $n_2$ such that $M_2:=N_{\mbf T_B}(n_2)>\lambda^{n_2}$. Let $L\geq 2$ be chosen such that there exists a zero-error wiretap $(M_1,L,n_1)$-code $\mbf F$ and let $\mbf G$ be a zero-error $(M_2,n_2)$-code. 

We define a transmission scheme as follows: We do the construction \eqref{eq:A(n),B(n)}-\eqref{eq:In} for the $n_1$-sampled system with $M$ replaced by $M_1$ and with $I_0=\{0\}$, thus obtaining $A\e{n_1}(k),B\e{n_1}(k)$, $P_{m,k}\e{n_1},m_k,I_k\e{n_1}$ (omitting the superscript $(n_1)$ at $m_k$). For some $K\in\mbb N$ to be chosen later and $1\leq k\leq K$, we set
\[
	f_k(x(0),\ldots,x(kn_1))=\mbf F(m_k)
\]
The intended receiver uses the mid point $\hat x(kn_1)$ of $P_{m_k,k}\e{n_1}$ as estimate of $x(kn_1)$. For $k>K$, we first define $A\e{n_2}(k-K),B\e{n_2}(k-K),P_{m,k-K}\e{n_2},m_{k-K},I_{k-K}\e{n_2}$ as in \eqref{eq:A(n),B(n)}-\eqref{eq:In} but with $I_0=P_{m_K,K}\e{n_1}$ (and again omitting the superscript $(n_2)$ at $m_{k-K}$). We then set
\[
	f_k(x(0),\ldots,x(Kn_1+(k-K)n_2)=\mbf G(m_{k-K})
\]
Decoding/estimating goes as in the first $K$ steps.

As $M_2>\lambda^{n_2}$ it is clear that the estimation error for the intended receiver at decoding times $(t_k)_{k=1}^\infty$ equals
\begin{equation}\label{eq:max1}
	\max\left\{\max_{1\leq k\leq K}\lvert P\e{n_1}_{m_k,k}\rvert,\sup_{k}\lvert P_{m_k,k-K}\e{n_2}\rvert\right\}<\infty.
\end{equation}
More precisely, by Lemma \ref{thm:estdyn}, the maximum in the curly brackets in \eqref{eq:max1} equals
\begin{equation}\label{eq:PmKlength}
	\lvert P_{m_K,K}\rvert=\left(\left(\frac{\lambda^{n_1}}{M_1}-1\right)^K\right)\frac{\Omega}{\lambda-1}\frac{\lambda^{n_1}-1}{\lambda^{n_1}-M_1}
\end{equation}
and the supremum inside the curly brackets in \eqref{eq:max1} equals the maximum of \eqref{eq:PmKlength} and
\begin{equation}\label{eq:anderelength}
	\frac{\Omega}{\lambda-1}\frac{\lambda^{n_2}-1}{M_2-\lambda^{n_2}}.
\end{equation}
Thus the intended receiver's estimation error is bounded at decoding times. In between, it can only grow finitely, so the total estimation error is bounded.

To prove security of the transmission scheme defined above, fix an $\varepsilon>0$. Now assume the eavesdropper receives a channel output sequence $(c_t)_{t=1}^\infty$. Lemma \ref{thm:estandsecwith} implies the existence of paths $(x(t))_{t=0}^\infty,(x'(t))_{t=0}^\infty$ such that for sufficiently large $K$, the estimates at time $Kn_1$ have distance 
\begin{equation}\label{eq:coeffafterK}
	\hat x(Kn_1)-\hat x'(Kn_1)\geq\lambda^{Kn_1}\left(\frac{L-1}{M_1-1}\frac{\Omega}{\lambda-1}-\varepsilon\right)
\end{equation}
(note that here, Lemma \ref{thm:estandsecwith} has to be applied with $\lvert I_0\rvert=0$ and $\hat x(0)=\hat x'(0)$). By choosing $K$ even larger if necessary, \eqref{eq:voraussohne} is satisfied with its left-hand side replaced by $\lvert\hat x(Kn)-\hat x'(Kn)\rvert$ and the right-hand side by 
\begin{equation}\label{eq:brbr}
	\frac{\tilde\Omega\e{n_2}}{\lambda^{n_2}-1}+\lvert P_{m_K,K}\rvert.
\end{equation}
This can be seen by applying \eqref{eq:coeffafterK} and by using \eqref{eq:PmKlength} to show that \eqref{eq:brbr} equals
\[
	\frac{\tilde\Omega^{(n_2)}}{\lambda^{n_2}-1}+\lvert P_{m,K}^{(n_1)}\rvert=\frac{\Omega}{\lambda-1}\left(1+\left(\left(\frac{\lambda^{n_1}}{M_1}\right)^K-1\right)\frac{\lambda^{n_1}-1}{M_1-\lambda^{n_1}}\right).
\]
One can thus apply Lemma \ref{thm:estandsecwithout} to find that for sufficiently large $k$ (and after enlarging $K$ again if necessary), the distance between $\hat x(kn_2)$ and $\hat x'(kn_2)$ is lower-bounded by
\begin{equation}\label{eq:divcoeffrawersterfall}
	\lambda^{Kn_1+(k-K)n_2}\frac{\Omega}{\lambda-1}\left(\frac{L-1}{M_1-1}-\frac{1}{\lambda^{Kn_1}}-\left(\frac{1}{M_1^{K}}-\frac{1}{\lambda^{Kn_1}}\right)\frac{\lambda^{n_1}-1}{M_1-\lambda^{n_1}}-2\varepsilon\frac{\lambda-1}{\Omega}\right).
\end{equation}
This tends to infinity as $k\rightarrow\infty$ and thus proves that the transmission scheme defined satisfies security. We have thus proved that there exists a reliable and secure transmission scheme in the case $C_0(\mbf T_B)>\log\lambda$ and $0<C_0(\mbf T_B,\mbf T_C)\leq\log\lambda$. 

Next we treat the case $C_0(\mbf T_B,\mbf T_C)>\log\lambda$. The construction is simpler than the previous case, as it applies the same zero-error wiretap code in every time step. Choose $n$ such that $M:=N_{(\mbf T_B,\mbf T_C)}(n)>\lambda^n$. Let $L\geq 2$ be chosen such that there exists a zero-error wiretap $(M,L,n)$-code $\mbf F$. 

We define a transmission scheme as follows: The construction \eqref{eq:A(n),B(n)}-\eqref{eq:In} is done for the $n$-sampled system with $I_0=\{0\}$ and thus obtain $A\e n(k),B\e n(k), P_{m,k}\e n,m_k, I_k\e n$ (omitting the superscript $(n)$ at $m_k$). We then set 
\[
	f_k(x(0),\ldots,x(kn))=\mbf F(m_k).
\]
Again, the intended receiver uses the mid point of $P_{m_k,k}\e n$ as estimate of $x(kn)$. 

By Lemma \ref{thm:estdyn}, the estimation error at times $0,n,2n,\ldots$ is bounded by 
\begin{equation}\label{eq:esterrzweiterfall}
	\frac{\Omega}{\lambda-1}\frac{\lambda^n-1}{M-\lambda^n}.
\end{equation}
Between these times, the error grows, but stays bounded. Hence the total estimation error is bounded, so the above transmission scheme is reliable.

For security, we apply Lemma \ref{thm:estandsecwith} and find that for any $\varepsilon>0$, any eavsdropper sequence $(c_t)_{t=1}^\infty)$ and sufficiently large $k$, there exist paths $(x(t))_{t=0}^\infty,(x'(t))_{t=0}^\infty\in\mbf R_C((c_t)_{t=1}^\infty)$ such that
\begin{equation}\label{eq:divcoeffrawzweiterfall}
	\hat x(kn)-\hat x'(kn)\geq\lambda^{kn}\left(\frac{\Omega}{\lambda-1}\frac{L-1}{M-1}-\varepsilon\right).
\end{equation}
Thus the transmission scheme also is secure. Altogether, this proves Theorem \ref{thm:main}.

\subsection{Proofs of Lemmas \ref{lem:esterr} and \ref{lem:divcoeff}}\label{subsect:lemmas}

We distinguish the cases $C_0(\mbf T_B,\mbf T_C)\leq\log\lambda$ and $C_0(\mbf T_B,\mbf T_C)>\log\lambda$ and treat both lemmas for each case at once. 

Let us start with the case $C_0(\mbf T_B,\mbf T_C)\leq\log\lambda$. The maximal estimation error at decoding times $0,n_1,\ldots,Kn_1,Kn_1+n_2,Kn_1+2n_2,\ldots$ is given by \eqref{eq:max1}, i.~e.\ the maximum of \eqref{eq:PmKlength} and \eqref{eq:anderelength}. The error \eqref{eq:PmKlength} is obtained at time $Kn_1$, whereas \eqref{eq:anderelength} is the asymptotic error as $k\rightarrow\infty$. By choosing $n_2$ sufficiently large, this asymptotic error can be made arbitrarily small by choice of $M_2$. Thus for any $\varepsilon>0$ and for sufficiently large $n_2=n_2(\varepsilon)$, we obtain
\[
	\lvert x(Kn_1+(k-K)n_2)-\hat x(Kn_1+(k-K)n_2)\rvert\leq\varepsilon.
\]
This proves Lemma \ref{lem:esterr} for the case $C_0(\mbf T_B,\mbf T_C)\leq\log\lambda$. 

To also show Lemma \ref{lem:divcoeff}, we just need to have a look at \eqref{eq:divcoeffrawersterfall}. First we choose $n_1$ so large that 
\[
	\frac{L-1}{M_1-1}\geq\sup_n\Delta_{(\mbf T_B,\mbf T_C)}(n)-\varepsilon.
\]
Thus the term in the outer brackets in \eqref{eq:divcoeffrawersterfall} is lower bounded by 
\[
	\sup_n\Delta_{(\mbf T_B,\mbf T_C)}(n)-\frac{1}{\lambda^{Kn_1}}-\left(\frac{1}{M_1^{Kn_1}}-\frac{1}{\lambda^{Kn_1}}\right)\frac{\lambda^{n_1}-1}{M_1-\lambda^{n_1}}-\varepsilon\left(1+2\frac{\lambda-1}{\Omega}\right)
\]
Next with sufficiently large $K$, it can be ensured that 
\begin{equation}\label{eq:formel1}
	\frac{1}{\lambda^{Kn_1}}+\left(\frac{1}{M_1^{Kn_1}}-\frac{1}{\lambda^{Kn_1}}\right)\frac{\lambda^{n_1}-1}{M_1-\lambda^{n_1}}+\varepsilon\left(1+2\frac{\lambda-1}{\Omega}\right)\leq2\varepsilon\left(1+\frac{\lambda-1}{\Omega}\right).
\end{equation}
Recall that $\varepsilon$ depends on $K$ and can be made arbitrarily small by enlarging $K$. Hence the term on the right-hand side of \eqref{eq:formel1} can be made arbitrarily small. This proves Lemma \ref{lem:divcoeff} for the case $C_0(\mbf T_B,\mbf T_C)\leq\log\lambda$. 

We next prove Lemmas \ref{lem:esterr} and \ref{lem:divcoeff} to also hold for the case $C_0(\mbf T_B,\mbf T_C)>\log\lambda$. By \eqref{eq:esterrzweiterfall} and the choice of $M$, the estimation error at decoding times can be made arbitrarily small by choosing $n$ sufficiently large. Note that this gives the claimed upper bound on the \textit{supremum} of all estimation errors at decoding times. This proves Lemma \ref{lem:esterr}. The proof of Lemma \ref{lem:divcoeff} is simple as well because of \eqref{eq:divcoeffrawzweiterfall}.

\subsection{Proof of Theorem \ref{lem:TBinj}}\label{subsect:TBinj}

For the proof of Theorem \ref{lem:TBinj}, observe that one can restrict attention to codes with $\lvert\mbf F(m)\rvert=1$ because no vertices are connected in $G(\mbf T_B^n)$ for any $n$. At blocklength $n$, the only question will be how many elements of $\mbf A^n$ can be used as codewords. We write $a_1^n:=(a_1,\ldots,a_n)$ for elements of $\mbf A^n$ and use analogous notation for elements $c_1^n\in\mbf C^n$. It has to be ensured that the eavesdropper cannot infer the codeword $a_1^n$, and thus the message, from its received $c_1^n\in\mbf C^n$. 

To formalize this, we introduce the notion of ``subhypergraph" of a hypergraph. Given a hypergraph $H$ with vertex set $\mbf V$ and hyperedge set $\mc E_H$, we call $\tilde H$ a \textit{subhypergraph of $H$} if the vertex set $\tilde{\mbf V}$ of $\tilde H$ is a subset of $\mbf V$ and if each of the hyperedges of $\tilde H$ has the form $\tilde e=e\cap\tilde{\mbf V}$ for some $e\in\mc E_H$ (the empty set is not allowed as hyperedge). Obviously, the subhypergraph $\tilde H$ is uniquely determined by $\tilde{\mbf V}$ and we denote it by $H\rvert_{\tilde{\mbf V}}$.

Denote by $\mbf T_C^n\rvert_{\mbf V}$ the channel $\mbf T_C^n$ restricted to inputs from $\mbf V\subset\mbf A^n$ and observe that the hypergraph $H(\mbf T_C^n\rvert_{\mbf V})$ is given by the subhypergraph $H(\mbf T_C^n)\rvert_{\mbf V}$ of $H(\mbf T_C^n)$. We can thus formulate our problem by saying that we have to find a large subhypergraph $H\e n$ of $H(\mbf T_C^n)$ which does not contain any hyperedge of cardinality 1.
 
This subhypergraph is found in several consecutive steps. We set $H(\mbf T_C^n)=:H\e n(0)$. First we eliminate from the possible channel input alphabet $\mbf A^n$ all elements $a_1^n$ which can be uniquely determined by the eavesdropper, i.~e.\ all $a_1^n$ such that $\{a_1^n\}$ is a hyperedge of $H\e n(0)$. If we write
\[
	\mbf A_1\e n(1):=\{a_1^n\in\mbf A^n:\{a_1^n\}\text{ is a hyperedge of }H\e n(0)\},
\]
and $\mbf A_2\e n(1):=\mbf A^n\setminus\mbf A_1\e n(1)$, we thus obtain the subhypergraph $H\e n(1):=H\e n(0)\rvert_{\mbf A_2\e n(1)}$ of $H\e n(0)$. 

Now $H\e n(1)$ may again contain hyperedges with cardinality 1: precisely those which have the form $e'=e\cap\mbf A_2\e n(1)$ for a hyperedge $e$ of $H\e n(0)$ which equals $e=\{a_1^n,\tilde a_1^n\}$ for some $a_1^n\in\mbf A_1\e n(1)$ and $\tilde a_1^n\in\mbf A_2\e n(1)$. Thus again eliminating those elements $a_1^n$ from $\mbf A_2\e n(1)$ where $\{a_1^n\}$ is a hyperedge of $H\e n(1)$, one arrives at a subhypergraph $H\e n(2)$, and so on. 

Formally, with $\mbf A_2\e n(0):=\mbf A^n$, we set for $s\geq 1$
\begin{align*}
	\mbf A_1\e n(s)&:=\{a_1^n\in\mbf A_2\e n(s-1):\{a_1^n\}\text{ is a hyperedge in }H\e n(s-1)\},\\
	\mbf A_2\e n(s)&:=\mbf A^n\setminus\mbf A_1\e n(s),\\
	H\e n(s)&:=H\e n(s-1)\rvert_{\mbf A_2\e n(s)}.
\end{align*}
After a finite number $S\e n$ of steps we arrive at a hypergraph $H\e n:=H\e n(S\e n)$ which is either empty or does not contain any hyperedge of cardinality 1. We denote the vertex set of $H\e n$ by $\mbf A_2\e n$ and define $\mbf A_1\e n:=\mbf A^n\setminus\mbf A_2\e n$. Observe that 
\begin{align}
	&\mbf A^n=\mbf A_2\e n(0)\supset\mbf A_2\e n(1)\supset\ldots\supset\mbf A_2\e n(S\e n)=\mbf A_2\e n,\notag\\
	&\mbf A\e n(1)\subset\ldots\subset\mbf A\e n(S\e n)=\mbf A_1\e n.\label{eq:A1sub}
\end{align}

The main step now is to prove $\mbf A_1\e n\subset(\mbf A_1\e 1)^n$ for every $n\geq 1$. Due to \eqref{eq:A1sub}, this is implied by
\begin{equation}\label{eq:claim}
	\mbf A_1\e n(s)\subset(\mbf A_1\e 1)^n\qquad\text{ for every }1\leq s\leq S\e n.
\end{equation}
For $n=1$ nothing has to be proved. For every $n\geq 2$ we prove \eqref{eq:claim} by induction over $s$.

Let $n\geq 2$ and $s=1$. If $a_1^n\in\mbf A_1\e n(1)$, then $\{a_1^n\}$ is a hyperedge in $H\e n(0)$. As $H\e n(0)=H\e1(0)^n$, i.~e.\ $H\e n(0)$ is the $n$-fold square product of $H\e 1(0)$ with itself, this is only possible if $a_i\in\mbf A_1\e1(1)\subset\mbf A_1\e1$ for all $1\leq i\leq n$. 

Now assume \eqref{eq:claim} is proven for all $1\leq\sigma\leq s$. Let $a_1^n\in\mbf A_1\e n(s+1)$, so that $\{a_1^n\}$ is a hyperedge in $H\e n(s)$. This implies that there exists a hyperedge $e\e n=\{a_1^n,a_{1,2}^n,\ldots,a_{1,\mu}^n\}$ in $H\e n(0)$ such that for every $2\leq\nu\leq\mu$ there exists a $1\leq\sigma_\nu\leq s$ such that $a_{1,\nu}^n\in\mbf A_1\e n(\sigma_\nu)$. By the induction hypothesis, $a_{1,\nu}^n\in(\mbf A_1\e1)^n$ for every $\nu$. 

Suppose $a_1^n\notin(\mbf A_1\e1)^n$. Then $a_i\notin \mbf A_1\e 1$ for some $1\leq i\leq n$, so $a_i\in\mbf A_2\e1$. Hence for every hyperedge $e$ of $H\e1(0)$ containing $a_i$ there is an $a_e\in\mbf A_2\e1$ not equal to $a_i$ such that both $a_i$ and $a_e$ are contained in $e$. 

Let $\{a_1^n,\tilde a_{1,2}^n,\ldots,\tilde a_{1,\tilde\mu}^n\}$ be any hyperedge in $H\e n(0)$ containing $a_1^n$. As $H\e n(0)$ is the $n$-fold square product of $H\e1(0)$, there must be a $2\leq\tilde\nu\leq\tilde\mu$ such that the $i$-th component of $\tilde a_{1,\tilde\nu}^n$ equals $a_e$ for one of the hyperedges $e$ of $H\e1(0)$ containing $a_i$. In particular, $a_{1,\tilde\nu}^n\notin(\mbf A_1\e1)^n$. However, this contradicts the existence of the hyperedge $e\e n=\{a_1^n,a_{1,2}^n,\ldots,a_{1,\mu}^n\}$ in $H\e n(0)$ which apart from $a_1^n$ only contains elements of $(\mbf A_1\e 1)^n$.

This proves the claim \eqref{eq:claim}, in particular $\mbf A_1\e n\subset(\mbf A_1\e 1)^n$. We therefore find that for every $n\in\mbb N$, the number of messages that can be sent securely equals
\[
	N_{(\mbf T_B,\mbf T_C)}(n)=\lvert\mbf A^n\rvert-\lvert\mbf A_1\e n\rvert\geq\lvert\mbf A\rvert^n-\lvert\mbf A_1\e 1\rvert^n.
\]
If $\mbf A_1\e1$ is a strict subset of $\mbf A$, then 
\[
	C_0(\mbf T_B,\mbf T_C)=\lim_{n\rightarrow\infty}\frac{\log N_{(\mbf T_B,\mbf T_C)}(n)}{n}=\log\lvert\mbf A\rvert.
\]
Otherwise, $C_0(\mbf T_B,\mbf T_C)$ obviously equals 0. This proves Theorem \ref{lem:TBinj}.

\subsection{Proof of Lemma \ref{lem:mittldist}}\label{subsect:mittldist}

The proof of Lemma \ref{lem:mittldist} is based on the fact that the labelling of the encoding sets of an $M$-code $\mbf F$ is arbitrary. Let $\mbf F$ be any $(M,L,n)$-code. $k$-fold concatenation of $\mbf F$ with itself gives a $(M^k,L\e k,kn)$-code $\mbf F^k$. We show that the encoding sets of $\mbf F^k$ can be labelled in such a way that 
\begin{equation}\label{eq:k--dist}
	L\e k=\frac{M^k-1}{M-1}L
\end{equation}
is possible. The idea is to order the messages $k$-tuples $(m_1,\ldots,m_k)$ lexicographically. We define this recursively: For $k=2$, the message pair $(m_1,m_2)$ gets the label
\[
	l\e2(m_1,m_2)=M(m_1-1)+m_2.
\]
For $k\geq 2$ we set
\[
	l\e{k+1}(m_1,\ldots,m_{k+1}):=M(l\e k(m_1,\ldots,m_k)-1)+m_{k+1}.
\]
It is easy to check that the range of values of $l\e k$ is $\{1,\ldots,M^k\}$. 

For the concatenated code, we label the coding set $\mbf F(m_1)\times\cdots\times\mbf F(m_k)$ with $l\e k(m_1,\ldots,m_k)$. Let $(c_1,\ldots,c_{kn})$ be an eavesdropper output sequence. As $\mbf F$ is an $(M,L,n)$-code, for every $1\leq i\leq n$ there are messages $m_i,m_i'$ satisfying $m_i-m_i'\geq L-1$ such that $(c_{(i-1)n+1},\ldots,c_{in})$ is generated by both $m_i$ and $m_i'$. It is easy to show by induction that the distance of $(m_1,\ldots,m_n)$ and $(m_1',\ldots,m_n')$ according to the labelling function $l\e k$ is 
\[
	l\e k(m_1,\ldots,m_n)-l\e k(m_1',\ldots,m'_n)\geq\frac{M^k-1}{M-1}L.
\]

Observe now that 
\[
	\frac{L\e k-1}{M^k-1}\longrightarrow\frac{L}{M-1}
\]
from below as $k\rightarrow\infty$. Thus every ratio $(L-1)/(M-1)$ can be improved by enlarging the blocklength, which proves the claim of Lemma \ref{lem:mittldist}.

% Generated by IEEEtran.bst, version: 1.13 (2008/09/30)

\end{document}